\newcommand{\A}{\mathcal{A}}    
\newcommand{\E}{\mathbf{E}}     
\newcommand{\p}[1]{\mathbf{P}\left\{#1\right\}}
\newcommand{\rFVI}{\hat{r}^\ast_{x_0}(K,A)}     
\newcommand{\R}{\mathbb{R}}     
\newcommand{\T}{\mathsf{T}}     
\newcommand{\W}{W}              
\newcommand{\We}{\hat{W}}       
\newcommand{\w}{w}              
\newcommand{\X}{\mathcal{X}}    
\newcommand{\Tx}[2]{T_x\left(#1 \mid #2 \right)}
\newcommand{\Txo}{T_x}
\newcommand{\tx}[2]{t_x\left(#1 \mid #2 \right)}
\newcommand{\txo}{t_x}
\newcommand{\xknext}{x_{k+1}}        
\newcommand{\xb}{x_k^{i}}
\newcommand{\xs}{x_{k+1}^{i,a,j}}
\newcommand{\xh}{x^{i}} 
\newcommand{\xsh}{y^{i,a,j}}
\newcommand{\Nh}{\tilde{N}} 
\newcommand{\Mh}{\tilde{M}} 
\newcommand{\Ca}{\mathbf{A}}
\newcommand{\Cb}{\mathbf{B}}
\newcommand{\Cc}{\mathbf{C}}
\newcommand{\Te}{\hat{\mathsf{T}}}
\newcommand{\M}[1]{\mathcal{#1}}
\newcommand{\Eo}[2]{\mathbf{E}_{#1}\left[#2\right]}
\newcommand{\ind}{\mathbf{1}}
\newcommand{\pseudo}[1]{\mathsf{dim}_p\left(#1\right)}
\newcommand{\acc}{\mathbf{\Delta}}
\newcommand{\Pacc}{\delta_{\mathbf{\Delta}}}
\begin{document} 
\title{Sampling-based Approximations with Quantitative Performance for the Probabilistic Reach-Avoid Problem over General Markov Processes}

\author{\name Sofie\ Haesaert 
 \email s.haesaert@tue.nl \\
       \addr Control Systems Group, Department of Electrical Engineering\\
       Eindhoven University of Technology\\
       Eindhoven, The Netherlands
       \AND
       \name Robert Babuska\email r.babuska@tudelft.nl \\
       \addr Delft Center of Systems and Control\\
       Delft University of Technology\\
       Delft, The Netherlands
       \AND
       \name Alessandro Abate 
       \email alessandro.abate@cs.ox.ac.uk \\
       \addr Department of Computer Science\\
       University of Oxford\\
       Oxford, United Kingdom
       }

\editor{ }

\maketitle

\begin{abstract}
This article deals with stochastic processes endowed with the Markov (memoryless) property and evolving over general (uncountable) state spaces.
The models further depend on a non-deterministic quantity in the form of a control input, which can be selected to affect the probabilistic dynamics. 
We address the computation of maximal reach-avoid specifications, together with the
synthesis of the corresponding optimal controllers.  
The reach-avoid specification deals with assessing the likelihood that any finite-horizon trajectory of the model enters a given goal set,
while avoiding a given set of undesired states. 
This article newly provides an approximate computational scheme for the reach-avoid specification based on the Fitted Value Iteration algorithm,
which hinges on random sample extractions, 
and gives a-priori computable formal probabilistic bounds on the error made by the approximation algorithm: 
as such, the output of the numerical scheme is quantitatively assessed and thus meaningful for safety-critical applications. 
Furthermore, 
we provide tighter probabilistic error bounds that are sample-based. 
The overall computational scheme is put in relationship with alternative approximation algorithms in the literature, 
and finally its performance is practically assessed over a benchmark case study. 
\end{abstract}

\begin{keywords}
General state-space processes,
reach-avoid problem,
dynamic programming,
fitted value iteration,
computational approximation with error bounds.
\end{keywords}
\section{Introduction}
\label{Introduction}  
This contribution concerns a problem grounded in concepts from a few different areas:
we deal with probabilistic processes evolving over continuous (and in particular uncountable) state spaces -- this leads to the use of measure-theoretical material from Stochastic Processes and Probability Theory \citep{MTH1993};
we work with models endowed with a control input and investigate control synthesis, which relate to a broad literature in Control Theory \citep{Bible}; 
furthermore, we are interested in quantifying the probability associated to a dynamical property, known as reach-avoid,
which corresponds to a widely used model specification in the field of Formal Verification \citep{bk2008}; 
and finally we employ a sampling-based algorithm to approximately compute the likelihood associated to the above specification. 
The algorithm, 
 known as Fitted Value Iteration (FVI) \citep{Remi}, 
 is a regression scheme developed in Machine Learning.\\ \\
We focus on stochastic processes endowed with the Markov property (where the future is independent of the past, conditional on the present) and,
aiming for generality, we deal with processes evolving over a continuous state space. 
We are further interested in a class of such models known as stochastic hybrid systems (SHS) \citep{APLS08}, which are endowed with a ``hybrid'' (that is, both continuous and discrete) state space, which 
are relevant for a number of applications in Engineering and the Life Sciences \citep{BL06,CL06}.
This work investigates the problem of controller synthesis over these models, 
namely the selection of sequences of control inputs (which in particular can be functions of the states of the model) over a finite time horizon,
in order to optimise a given figure of merit. 

As for the figure of merit of interest in this work, 
we choose to go beyond the classical properties investigated in Systems and Control theory, which by and large deal with known and standard problems of stability, regulation, and tracking.
Instead, 
we focus on the reach-avoid specification, a property that is well known and central within the Formal Verification field 
\citep{bk2008}. 
Notice that classical results in Formal Verification deal with simple models -- usually finite-state transition systems or Markov chains -- which allow for the development of 
computational results, and which mostly deal with verification tasks that do not involve policy synthesis. In this work instead we consider reach-avoid specifications over models with continuous stochastic transitions and endowed with control inputs.

The reach-avoid problem deals with computing the likelihood that, within a given finite time horizon, a trajectory of the model enters a goal set, 
while avoiding a given set of undesired states (both sets are arbitrary measurable subsets of the state space). 
Equivalently, the property can be expressed as the probability that the process enters a goal set while dwelling within a set of allowed states.
The reach-avoid property is a generalization of widely studied properties, 
such as reachability and invariance,
and represents a known specification (denoted as ``bounded until'') that lies at the core of a number of modal logics used in the field of formal verification,
such as Linear Temporal Logic and Computational Time Logic \citep{bk2008}. 
From a controller synthesis perspective, the goal becomes that of either maximizing or minimizing the above specification 
over the given time horizon. 

In the context of probabilistic models evolving over continuous domains and in discrete time (which is the framework considered in this work), 
the probabilistic reachability and reach-avoid specifications have been investigated in \citep{APLS08,Reachavoid}. 
These results have recently led to the study of other properties, either richer \citep{AbateQuanti} or defined over unbounded time horizons \citep{ta2014}. 
These results have focused on the theoretical \emph{characterization} of the specifications/properties of interest:
of course it is also of much interest to provide algorithms that can numerically compute these figures. 
Computational approaches to probabilistic reachability have been studied in \citep{Abate3,SA13}: 
the strength of these results is that the proposed numerical schemes have explicitly quantified error bounds.
This is unlike other, known approximation schemes in the literature \citep{KR06,KD01,PH06},
which provide results with properties that are only known asymptotically. 

This article provides a new approximate computational scheme for the reach-avoid specification based on the Fitted Value Iteration algorithm, 
which hinges on random sample extractions. 
This work originally derives formal probabilistic bounds on the error made by the approximation algorithm. 
In order to do so, the FVI scheme is tailored to the characterization of the reach-avoid problem, 
which leads to Dynamic Programming (DP) recursions based on a sum-multiplicative form that is non-standard since it departs from the classical additive (possibly discounted) cost functions \citep{Bible}. 
Starting from the regression bounds 
in \cite{Remi}, 
this work includes new results on the error 
for the FVI approximation and a-priori performance guarantees. 
Additionally, 
novel and tighter probabilistic error bounds for dynamic programming solutions of the reach-avoid problem based on samples extraction are presented. 
As a comparison to the alternative techniques in the literature \citep{Abate3,SA13},  
we 
show the related techniques provide bounds that are valid deterministically, 
whereas the proposed result yields tighter results in general that are valid with a certain (tunable) confidence.  
The outcomes lead to an approach providing controller synthesis with a certified performance, 
which is relevant for safety-critical applications \citep{BL06}. \\
 The proofs of the statements are included in the Appendix.

\section{Probabilistic Reach-Avoid Problem over General Markov Processes}
\label{SectionTwo}\mbox{ }
\begin{definition}[General Markov process]
A discrete-time general Markov process is comprised of:
\begin{itemize}
\item A continuous (uncountable) state space $\X \subset \R^n$;
\item An action space $\A=\{a_1,\ldots,a_m\}$ consisting of a finite number of actions;
\item A Borel-measurable stochastic kernel $ \Txo$, which assigns to each state-action pair $x\in \X$ and $a\in \A$ a probability distribution $\Tx{\cdot}{x,a}$ over $\X$.
\end{itemize}
\end{definition}
We denote with $(\X, \mathcal B(\X), P)$ a probability structure on $\X$, where $\mathcal B(\X)$ is the $\sigma$-algebra associated to $\X$ and $P$ is characterized as $P(y\in\!\! A | x\!\in\! \X, a\! \in \!\A)\! =\!  \int_A T_x(dy|x,a)$.
We assume that the stochastic kernels admit densities so that $\int_A \Tx{dy}{x,a} = \cramped{\int_A \tx{y}{x,a}dy}$.
%
\begin{definition}[Markov policy]\label{def:MarkovPolicy}  A Markov policy $\mu$ over horizon $\cramped{[0,N_t]}$  is a sequence  \(\mu=(\mu_0, \mu_1, \ldots,\mu_{N_t-1})\) of universally measurable maps,
$
\mu_k:\X \rightarrow \A,   k=0,1,\ldots,N_t-1,
$ 
from the  state space $\M{X}$ to the action space $\mathcal{A}$.
The set of Markov policies is denoted as $\M{M}$. 
\end{definition}
The evolution of the general Markov process is considered over a finite horizon $k=0,1,\ldots,N_t$,
with $N_t\in\mathbb{N}$.
Consider a discrete-time general Markov process,
a Markov policy $\mu$, a deterministic initial state $x_0\in\X$
and a finite time horizon $N_t$:
an execution of the process characterizes a state trajectory given as $\{x_k|k=0,1,\ldots,N_t\}$.
The process evolves over the product space $\cramped{(\X)^{N_t+1}}$,
which is again endowed with a (product) $\sigma$-algebra and allows computing probability associated to events over trajectories --
we denote this probability by $\cramped{\mathbf{P}}$, and further define the probabilities $\cramped{\mathbf{P}_{x_0}, \mathbf{P}_{x_0}^{\mu}}$ as $\mathbf{P}$ conditioned on an initial state
and 
 on an initial state and a policy, respectively.
The state at the $(k+1)$-st time instant, $x_{k+1}$,
is obtained as a realization of the controlled Borel-measurable stochastic kernel $\Tx{\cdot}{x_{k}, \mu_{k}(x_k)}$.
The model can be initialized according to an initial probability measure $P_0 \in M(\X)$, 
where $M(\cdot)$ denotes the collection of probability distributions over a given set.
\subsection{Probabilistic Reach-Avoid Problem: Definition}  \mbox{ }

Let us define the probabilistic reach-avoid problem,
also known as constrained reachability \citep{bk2008},
and provide its characterization.
Consider a 
safe set $A\in \M{B}(\X)$,
a target set $K \in \M{B}(\X)$,
and a finite time horizon $N_t\in \mathbb{N}$. 
A given state trajectory $\{x_k|k=0,1,\ldots,N_t\}$ verifies the reach-avoid property if it reaches the target set $K$ within the time horizon,
while staying inside the safe set $A$. 
This property can be expressed as
\begin{align*}
\exists j \in [0,N_t]: x_j \in K \wedge \forall i \in [0,j-1]:\, x_i \in A\setminus K.\end{align*}
Let us now consider the probabilistic reach-avoid property for a general stochastic system,
defined as the probability that
an execution associated with a fixed Markov policy $\mu \in \M{M}$ and an initial condition $x_0\in\X$ reaches the target set $K$ while avoiding $\M{X}\setminus A$. 
Formally, 
\begin{align}\label{eq:logical_RA}
r_{x_0}^\mu(K,A) = \mathbf{P}_{x_0}^\mu \left\{ \exists j \in [0,N_t]: x_j \in K
\wedge \; \forall i \in [0,j-1]: x_i \in A\setminus K\right\},
\end{align}
where the 
states $x_0,x_1,\ldots,x_{N_t}\in \M{X}$ are sampled via
the stochastic kernel $\Txo$ under policy $\mu$.
The formula contained in (\ref{eq:logical_RA}) can be written as a boolean expression 
using indicator functions,
which leads to 
an expectation over the state trajectories as
\begin{align*}\textstyle
r_{x_0}^\mu(K,A)=\E_{x_0}^{\mu}\bigg[\sum\limits_{j\in[0,N_t]}\bigg(\prod_{i=0}^{j-1}\ind_{A\setminus K}(x_i)\bigg)\ind_{K}(x_j)\bigg],
\end{align*}
where $\ind_{B}(x) = 1$ if $x \in B$, else it is equal to $0$.
The reach-avoid problem subsumes other known problems widely studied in System and Control Theory and in Formal Verification,
such as that of reachability of set $K$, which is simply obtained by selecting $A = \X$,
or that of invariance within a set $B$, which is characterized as the dual of the reachability problem over set $\X \setminus B$. 

For a given
policy $\mu$, the time-dependent value function $W_k:\X\rightarrow [0,1]$, defined as
\begin{align*}\textstyle \W_k^\mu(x)\!=\E^\mu\!\bigg[ \sum\limits_{j\in [k+1,N_t]}\!\!\!\bigg( \prod\limits_{i=k+1}^{j-1}\!\!\!\ind_{A\setminus K}(x_i)\bigg)\ind_K(x_j)\bigg|x_k=x\bigg],
\end{align*}
is the probability that the state trajectory $\{x_{k+1},\ldots, x_{N_t}\}$,
starting from $x_k$, will reach the target set $K$ within the time horizon $[k,N_t]$,
while staying within the safe set $A$.
This function allows expressing the reach-avoid probability backward recursively,
as follows. 
%
\begin{proposition}\label{prop:DP}
Given a policy $\mu=(\mu_0, \mu_1, \ldots,\mu_{N_t-1})$,
define function $\W_{k}^\mu : \X\rightarrow [0,1]$ by backward recursion
\begin{align*} \textstyle
\W_{k}^\mu(x)=\E_{x}^{\mu_k}\left[\ind_{K}(\xknext)+\ind_{A\setminus K}(\xknext)\W_{k+1}^\mu(\xknext)
\right],
\end{align*}
and initialized with $\W_{N_t}^\mu(x)=0$. 
Then for any initial state $x_0\in \X$, the probabilistic reach-avoid property $r_{x_0}^\mu(K,A)$ can be expressed as 
\begin{align*}
r_{x_0}^\mu(K,A)=\ind_{K}(x_0)+\ind_{A\setminus K}(x_0) \W_0^\mu(x_0)\ .\end{align*}
\end{proposition}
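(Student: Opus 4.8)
The plan is to prove the two claims — the backward recursion for $\W_k^\mu$ and the final identity for $r_{x_0}^\mu(K,A)$ — essentially by unfolding the defining expectations and using the tower property of conditional expectation together with the Markov property of the process. I would proceed by backward induction on $k$, starting from $k=N_t$ and descending to $k=0$.

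\textbf{Step 1 (base case).} At $k=N_t$ the sum $\sum_{j\in[N_t+1,N_t]}$ is empty, so the defining formula for $\W_{N_t}^\mu(x)$ gives $\W_{N_t}^\mu(x)=0$, matching the stated initialization.

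\textbf{Step 2 (inductive step).} Assume the closed-form expression for $\W_{k+1}^\mu$ (as given in the definition preceding Proposition~\ref{prop:DP}) holds. Starting from the definition of $\W_k^\mu(x)$, I would split the outer sum $\sum_{j\in[k+1,N_t]}$ into the $j=k+1$ term and the remaining terms $j\in[k+2,N_t]$. For $j=k+1$ the inner product $\prod_{i=k+1}^{k}$ is empty, contributing $\E^\mu[\ind_K(x_{k+1})\mid x_k=x]$. For $j\ge k+2$, factor out $\ind_{A\setminus K}(x_{k+1})$ from the product $\prod_{i=k+1}^{j-1}$, and then condition on $x_{k+1}$: by the tower property,
\begin{align*}
\E^\mu\!\Big[\ind_{A\setminus K}(x_{k+1})\!\!\sum_{j\in[k+2,N_t]}\!\Big(\prod_{i=k+2}^{j-1}\ind_{A\setminus K}(x_i)\Big)\ind_K(x_j)\,\Big|\,x_k=x\Big]
= \E^\mu\big[\ind_{A\setminus K}(x_{k+1})\W_{k+1}^\mu(x_{k+1})\mid x_k=x\big],
\end{align*}
where the inner conditional expectation given $x_{k+1}$ collapses to $\W_{k+1}^\mu(x_{k+1})$ precisely by the Markov property (the future trajectory $x_{k+2},\dots$ depends on the past only through $x_{k+1}$) and the time-homogeneity of the kernel $\Txo$. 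Since $x_{k+1}$ is distributed according to $\Tx{\cdot}{x,\mu_k(x)}$, both contributions combine into $\E_x^{\mu_k}[\ind_K(x_{k+1})+\ind_{A\setminus K}(x_{k+1})\W_{k+1}^\mu(x_{k+1})]$, which is the claimed recursion.

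\textbf{Step 3 (final identity).} For the reach-avoid probability itself, I would note that the event in \eqref{eq:logical_RA} decomposes on whether $x_0\in K$ (in which case $j=0$ works and the property holds with probability one, contributing $\ind_K(x_0)$) or $x_0\in A\setminus K$ (in which case the property reduces to the reach-avoid event for the trajectory started at $x_0$ over $[1,N_t]$, whose probability is exactly $\W_0^\mu(x_0)$ by the definition of $\W_0^\mu$), or $x_0\notin A$ (probability zero, and indeed both indicator prefactors vanish). Summing these disjoint cases yields $r_{x_0}^\mu(K,A)=\ind_K(x_0)+\ind_{A\setminus K}(x_0)\W_0^\mu(x_0)$.

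\textbf{Main obstacle.} The only delicate point is the measure-theoretic justification of Step 2: interchanging the finite sum with the expectation is harmless, but one must argue carefully that conditioning on $x_{k+1}$ turns the tail sum into $\W_{k+1}^\mu(x_{k+1})$. This requires the Markov property in the form that, under a Markov policy, the conditional law of $(x_{k+2},\dots,x_{N_t})$ given the full history up to $k+1$ depends only on $x_{k+1}$, plus measurability of $\W_{k+1}^\mu$ (which follows inductively from universal measurability of the $\mu_j$ and Borel-measurability of $\Txo$, so that the integrals are well defined). Once this is in place, everything else is a routine bookkeeping of indicator functions.
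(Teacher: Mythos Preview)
Your argument is correct. The paper itself does not give a self-contained proof: it simply defers to \cite[Lemma 4]{Reachavoid}, noting that the result there is stated for the shifted value function $V_k^\mu(x)=\ind_K(x)+\ind_{A\setminus K}(x)\W_k^\mu(x)$. Your backward-induction proof via the tower property is exactly the standard direct argument that sits behind that citation, so in substance you are supplying what the paper omits rather than taking a different route.

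One minor remark on Step~2: you appeal to ``time-homogeneity of the kernel $\Txo$,'' but that is not really the operative ingredient. What makes the inner conditional expectation collapse to $\W_{k+1}^\mu(x_{k+1})$ is the Markov property of the controlled process under a \emph{Markov} policy: conditional on $x_{k+1}$, the tail $(x_{k+2},\ldots,x_{N_t})$ is generated by the remaining policy components $(\mu_{k+1},\ldots,\mu_{N_t-1})$ and the kernel $\Txo$, which is precisely the law defining $\W_{k+1}^\mu(x_{k+1})$. Time-homogeneity of $\Txo$ is already part of the model specification and need not be invoked separately. This is purely a point of phrasing; the logic of your proof is sound.
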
\mbox{ }
\begin{proof}
The proof follows \cite[Lemma 4]{Reachavoid},
where the above statement is proven for a value function $V_k^\mu(x)=\ind_K(x)+\ind_{A\setminus K}(x)W_k^{\mu}(x)$. 
\end{proof}\\*[1em]
Notice that, while the probabilistic reach-avoid problem has been formulated above via DP recursions,
it hinges on a sum-multiplicative characterization which is non-standard:
much of the analytical and computational results in DP are 
formulated for additive (possibly discounted) cost functions \citep{Bible}.

Rather than selecting and fixing a policy $\mu$ as done above,
we now focus on the controller synthesis problem,
which seeks the Markov policy $\mu^\ast$ that maximizes the probabilistic reach-avoid property,
and which is such that $r_{x_0}^{\ast}(K,A) = \sup_{\mu\in \M{M}}r^\mu_{x_0}(K,A)$. 
Let us emphasize that the optimization is over finite-action policies, 
which are however functions of the continuous state space. 
The optimal policy can be characterized as follows.
\begin{proposition}\label{prop:optimalDP}
Define functions $\W_{k}^{\ast} : \X\rightarrow [0,1]$, by the backward recursions
\begin{align*}%
\W_{k}^{\ast}(x)=\max_{a\in \A}\E_{x}^a\left[\ind_{K}(\xknext)+\ind_{A\setminus K}(\xknext)\W_{k+1}^\ast(\xknext)\right],
\end{align*}
with $\xknext\sim \Tx{\cdot}{x,a}$ 
for $k=N_t-1,N_t-2,\ldots,0$, and initialized by $W_{N_t}^\ast(x)=0$.
Then for any initial state $x_0\in \X$ the optimal probabilistic reach-avoid property $r_{x_0}^\ast(K,A)$ can be expressed as
\begin{align*} 
&r_{x_0}^\ast(K,A)=\ind_{K}(x_0)+\ind_{A\setminus K}(x_0) \W_0^{\ast}(x_0).\hspace{2cm} 
\end{align*}
Furthermore, $\mu_k^\ast:\X\rightarrow \A$ for $k=N_t-1,N_t-2,\ldots,0 $, is such that $\forall x \in \X$:
\begin{align*} 
\mu_k^\ast(x)=\arg\max_{\mathclap{a\in\A}} \E_x^a\!\left[\ind_{K}(\xknext)
\! +\!\ind_{A\setminus K}(\xknext)\W_{\cramped[\scriptstyle]{k+1}}^\ast(\xknext)\right] 
\end{align*}
and $\mu^\ast=(\mu_0^\ast, \mu_1^\ast, \ldots,\mu_{N_t-1}^\ast)$ is the optimal probabilistic reach-avoid Markov policy.
\end{proposition}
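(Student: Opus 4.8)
The plan is to establish the optimal backward recursion by a standard dynamic-programming (Bellman optimality) argument: I would show that the pointwise recursion defining $\W_k^\ast$ both (i) dominates $\W_k^\mu$ for every Markov policy $\mu$ and (ii) is exactly attained by the argmax policy $\mu^\ast$. Both facts are proved by backward induction on $k$, with the base case $k=N_t$ trivial since all the value functions vanish there. Combining (i) and (ii) gives $\W_0^\ast(x)=\sup_{\mu\in\M{M}}\W_0^\mu(x)$ for every $x$, and the claimed formula for $r_{x_0}^\ast(K,A)$ then follows by substituting the representation of Proposition~\ref{prop:DP}.

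For direction (i), fix $\mu=(\mu_0,\ldots,\mu_{N_t-1})\in\M{M}$ and assume $\W_{k+1}^\mu\le\W_{k+1}^\ast$ pointwise. Since the coefficient $\ind_{A\setminus K}(\cdot)$ is nonnegative, the recursion of Proposition~\ref{prop:DP} yields
\begin{align*}
\W_k^\mu(x) &= \E_x^{\mu_k}\!\left[\ind_K(\xknext)+\ind_{A\setminus K}(\xknext)\W_{k+1}^\mu(\xknext)\right]\\
&\le \E_x^{\mu_k}\!\left[\ind_K(\xknext)+\ind_{A\setminus K}(\xknext)\W_{k+1}^\ast(\xknext)\right]\le \W_k^\ast(x),
\end{align*}
the last step because $\mu_k(x)\in\A$ so the expectation is bounded by the maximum over $a\in\A$. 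Along the way I would verify, by the same induction, that each $\W_k^\ast$ is valued in $[0,1]$ and Borel measurable; the latter holds because the Borel-measurability of the kernel $\Txo$ makes $x\mapsto\E_x^a[f(\xknext)]$ measurable for every bounded measurable $f$, and a finite maximum of measurable functions is measurable, so all the expectations and the maximisation are well defined.

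For direction (ii), since $\A$ is finite the maximum defining $\W_k^\ast$ is attained for every $x$, and I would build a universally measurable (indeed Borel) maximiser by fixing an ordering $a_1,\ldots,a_m$ of $\A$ and letting $\mu_k^\ast(x)$ be the least-indexed action attaining the maximum; the sets $\{x\in\X:\mu_k^\ast(x)=a_i\}$ are Borel because each $x\mapsto\E_x^{a_i}[\ind_K(\xknext)+\ind_{A\setminus K}(\xknext)\W_{k+1}^\ast(\xknext)]$ is measurable. A second backward induction, applying the fixed-policy recursion of Proposition~\ref{prop:DP} to $\mu^\ast$ and using that $\mu_k^\ast(x)$ is a maximiser, then gives $\W_k^{\mu^\ast}(x)=\W_k^\ast(x)$ for all $x$ and $k$. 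Finally, inserting $r_{x_0}^\mu(K,A)=\ind_K(x_0)+\ind_{A\setminus K}(x_0)\W_0^\mu(x_0)$ and using nonnegativity of $\ind_{A\setminus K}(x_0)$ once more, the supremum over $\mu$ is attained at $\mu^\ast$, delivering the stated expression for $r_{x_0}^\ast(K,A)$ and the optimality of $\mu^\ast$.

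The main obstacle I anticipate is the measurable-selection step: one must ensure $\mu^\ast$ is a genuine (universally measurable) Markov policy in the sense of Definition~\ref{def:MarkovPolicy}. With a finite action space this is light — it reduces to the Borel-measurability of the finitely many level sets of $\mu_k^\ast$, hence ultimately to the measurability of $x\mapsto\E_x^a[\,\cdot\,]$ already guaranteed by the model assumptions — but it is the only point that is not routine monotone backward induction on $k$.
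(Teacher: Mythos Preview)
Your proposal is correct and follows the standard Bellman-optimality backward-induction argument. The paper itself does not give a self-contained proof of this proposition: it simply refers to \cite[Theorem~6]{Reachavoid} and \citep{APLS08}, so your detailed argument in fact supplies more than the paper does. The approach you outline---domination of $\W_k^\mu$ by $\W_k^\ast$ via monotonicity, attainment by the finite-action argmax policy, and the easy measurable-selection step afforded by $|\A|<\infty$---is exactly the route taken in those cited works.
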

\begin{proof}
See again \cite[Theorem 6]{Reachavoid} and \citep{APLS08}.
\end{proof}

For a given time horizon $N_t$,
the computation of  $r_{x_0}^\ast(K,A)$,
as in Proposition \ref{prop:optimalDP},
can be seen as the application of $N_t$ mappings.
More precisely, let us define a dynamic programming operator $\T$ as 
$\W^\ast_k=\T \W_{k+1}^\ast$,  
such that for all states $x\in\X$, the function $\W_{k}^\ast:\X\rightarrow[0,1]$ is defined as
\begin{align}\label{eq:T} 
\W_{k}^\ast (x) &= \left(\T \W^\ast_{k+1}\right)(x)\\&
=\max_{a \in \M{A}}\E_x^a\!\!\left[\ind_K(\xknext)\!+\!\ind_{A\setminus K}(\xknext)\W^\ast_{k+1}(\xknext)
\right].\notag
\end{align}
The value of the optimal probabilistic reach-avoid property can be written as the composition of $N_t$ mappings,
\begin{align*} r_{x_0}^\ast(K,A)=\ind_{K}(x_0)+\ind_{A\setminus K}(x_0)\left(\T^{N_t}W^\ast_{N_t}\right)(x_0).
\end{align*}%

\subsection{Computation of the Reach-Avoid Probability}%
Notice that generally it is not possible to solve the above recursions exactly:
in order to determine the backwards iteration at a single point $x_i\in \X$,
namely $\cramped{\W_{k}^\ast(x_i)}=\big(\T \W_{k+1}^\ast\big)(x_i)$,
one should exactly solve (\ref{eq:T}).
The exact solution of (\ref{eq:T}) however is seldom analytical and can possibly result in computationally expensive procedures.
%
%
The absence of an analytical representation for $\cramped{W^\ast_k}:\X\rightarrow [0,1]$ leads to the use of approximation techniques,
which can be categorized in two families: 
\begin{compactenum} \item \emph{Numerical approximation techniques}, which provide an approximation of the optimal probabilistic reach-avoid problem with actual error bounds.
More precisely, 
for a given error bound $\acc>0$ we seek a numerical scheme that obtains an approximation $\hat{r}^\ast_{x_0}(K,A)$,
which is such that $|\hat{r}^\ast_{x_0}(K,A)-r_{x_0}^\ast(K,A)|\leq \acc$.
This approach is taken in 
\citep{Abate3,SA13},
and the scheme is prone to suffer from the curse of dimensionality,
since it approximates a general stochastic system with a Markov chain by partitioning the state space.
\item \emph{Probabilistic approximation techniques}, which approximate the original problem with probabilistic guarantees.
The obtained approximation scheme $\hat{r}^\ast_{x_0}(K,A)$ for $r_{x_0}^\ast(K,A)$ depends on a finite number of samples or sampled paths of the underlying model. 
For a given error bound $\acc>0$ and confidence $1-\emph{}\Pacc$,
the probability that the approximation is not close to the the optimal value can be bounded probabilistically as
\begin{align}\label{eq:acc} \p{\left|\hat{r}^\ast_{x_0}(K,A)-r_{x_0}^{\ast}(K,A)\right|>\acc}\leq \Pacc. \end{align}%
\end{compactenum}
In this work, we newly pursue the second approach 
by focusing on results from the area of learning, and in particular on algorithms for functional approximations. 
A learning approach is 
suitable for complex systems,
such as general Markov processes,
since it replaces model-based evaluations by model-free, sample-based evaluations \citep{RL_Functionapprox}. 
We adopt the Fitted Value Iteration scheme (FVI) \citep{Remi}, 
 a learning algorithm fit in particular for finite horizon settings.\\*
%
In practice, 
bounds for probabilistic approximation methods (\ref{eq:acc}) can be divided into two groups. 
Firstly,   
general model-free and  
sample-free bounds, 
which provide \emph{a-priori guarantees} on the achievable accuracy for a finite sample set. 
Though they can show convergence in probability up to a bias term, 
their generality can render them conservative when used as a tool to assign an accuracy guarantee. 
Alternatively, 
model-based and sample-based bounds: 
these bounds verify the accuracy of a dynamic programming scheme by drawing samples of the model and using available information from the model, 
and from the specific reach-avoid property under study.   
In the analysis of the algorithm these bounds can be perceived as complementary. 
A-priori and sample-free
bounds are derived in Section \ref{SectionFour} based on model-free/distribution-free notions, 
whereas model-based and sample-based bounds are given in Section \ref{EmpError}.

\section{Fitted Value Iteration}
\label{SectionThree}


In this section we consider a learning algorithm that has been developed to solve 
additive-cost optimal control problems,
and adapt it to the reach-avoid optimal control setting. 
Known as 
FVI \cite{Remi},
the algorithm 
extracts a finite number 
of samples from the underlying model to numerically approximate the value recursions in (\ref{eq:T}). 
More precisely, the scheme generalizes the information gathered from the samples to approximate the ``exact'' optimal value function $\W^\ast_k$ as $\We^\ast_k$ in two steps:
first by estimating $\W^\ast_k$ over a finite number of states,
thereafter fitting the analytical function $\We^\ast_k\in\M{W}$ to the estimate. 
We define $\M{W}$ to be a strict subset of $B(\X;1)$,
the class of measurable functions defined over $\X$, lower bounded by 0 and upper bounded by 1.

\begin{figure}[h]\begin{minipage}{\linewidth}
{\rule{.975\linewidth}{1.5pt}}
\captionof{algorithm}{Sample generation for the FVI algorithm}\label{alg:samples}
{\rule{.975\linewidth}{1.5pt}}
{\begin{minipage}[h]{.95\linewidth}
 \small \it
 Given a safe set $A$, a target set $K$, a time horizon $N_t$, and distribution $\eta$,
 generate $N$ base points, and $M$ samples at each base point as follows:
  \begin{compactenum}
  \item
  Draw $N$ base points $\cramped{\left( \xb\right)_{1\leq i\leq N}}$ from the distribution $\eta$ in $A\setminus K$;
  \item
  Draw $M$ samples at each base point $\xb$ and at each actions $a\in \A$ from the stochastic kernel $\Txo$,
  and denote the set of samples as $\left(\xs\right)_{1\leq j\leq M}$.
  \end{compactenum}
 \end{minipage}}
{\rule{.975\linewidth}{1.5pt}}
\end{minipage}
\end{figure} 
The FVI algorithm employs samples that are generated by the underlying model. 
%
Samples referred to as ``base points'' are taken from a chosen distribution $\eta$, 
and additional samples are drawn at the base points from the transition kernel. 
Algorithm \ref{alg:samples} summarizes the sample generation. 
Let us remark that at each iteration $k = N_t-1, \ldots, 0$,
a new set of samples is generated and used.
We denote as ``sample complexity'' the cardinality of the sample generation, namely $N$ and $M$. 

At each (backward) iteration $k = N_t-1,\ldots,0$,
the algorithm executes two steps in order to approximate the exact recursion in (\ref{eq:T}):
\begin{enumerate}
\item
The first step consists of estimating the value of the backward mapping $ (\T \We^\ast_{k+1} )(\xb)$ at $N$ base points $\xb$.
The recursion  $(\T \We^\ast_{k+1} )(\xb)$ is estimated by an empirical operator $ \Te $ as follows:
\begin{align}\label{eq:FVIstep1}
&\hspace{-.5cm}\big(\Te \We^\ast_{k+1}\big)(\xb)=\max_{a \in \M{A}}\frac{1}{M}\sum_{j=1}^{M}\ind_{K}(\xs)
+\ind_{A\setminus K}(\xs)\We^\ast_{k+1}(\xs).
\end{align}
Here $\cramped{\xs}$ represent $M$ independent and identically distributed realizations obtained from $\Tx{\cdot}{\xb,a}$.
Hence, $j \in [1, M], i \in [1, N]$, and $a \in \mathcal{A}$.
For an increasing number of samples $\xs$,
the estimate $ (\Te \We^\ast_{k+1} )(\xb)$ converges to $ (\T \We^\ast_{k+1} )(\xb)$ with probability 1, by the law of large numbers.

\item
In the second step, function $\cramped{\We^\ast_{k}\in\M{W}}$
is estimated as the solution of
\begin{align}\label{eq:FVIstep2}
\We^\ast_{k}=\arg\min_{\w\in\M{W}}  \sum_{i=1}^{N} \left|\w(\xb)-\Te \We^\ast_{k+1}(\xb) \right|^p.\end{align}
We assume that the argument of the minimum belongs to the function class (this fact will be further discussed below).
The base points $\cramped{(\xb)_{1\leq i \leq N}}$ are independently drawn from a distribution $\eta$ supported over the set $A\setminus K$.
The power factor $p\geq1$ is a given positive number.
Given a function $\w(x)$ and an increasing value of $N$, the 
summation in (\ref{eq:FVIstep2}) converges to  $\int_{A\setminus K} |\w(x)-\Te\We^\ast_{k+1}(x)|^p \eta(x) dx$, by the law of large numbers. 
This leads (cf. Section \ref{SectionFour}) to the convergence
of the argument $\We^\ast_k$ to the optimal fit that minimizes the distance between $ \Te \We^\ast_{k+1} $
and the functions $w \in \M{W}$
with respect to the $p$-norm,
weighted by a distribution with density $\eta$ and supported over the set $A\setminus K$, namely
\noindent{\small\begin{align*}
  \hspace{-1.5em}\|\We^\ast_{k}\!-\!\Te\We^\ast_{k+1} \|_{p,\eta}=\! \bigg(\!\int_{A \setminus K}\!\!|\We^\ast_{k}(x)-\!\Te\We^\ast_{k+1}(x)|^p \eta(x)dx\bigg)^{\!\!\frac{1}{p}}.\end{align*}}
\end{enumerate}

The overall FVI algorithm is summarized in Algorithm \ref{alg:FVI}.
The iterations are initialized as $\We^\ast_{N_t}(x)=0, x \in \X$,
and updated over the functions $\We^\ast_{N_t-1},\We^\ast_{N_t-2},\ldots,\We_1^\ast$.
Finally the value function at $k=0$ is approximated at the initial condition $x_0$ with a sample-based integration,
similar to (\ref{eq:FVIstep1}), using $M_0$ independent and identically distributed realizations of $\Tx{\cdot}{x_0,a}$ for all $a\in\A$.

\begin{figure}[htp]\begin{minipage}{\linewidth}
{\rule{.975\linewidth}{1.5pt}}
\captionof{algorithm}{Fitted Value Iteration algorithm}\label{alg:FVI}
{\rule{.975\linewidth}{1.5pt}}
{\begin{minipage}[h]{.95\linewidth}
 \vspace{0.05cm}
 \small \it
 Given an initial condition $x_0\in\X$,
 a safe set $A$,
 a target set $K$,
 a time horizon $N_t$,
 a set of $N$ base points and of $M$ samples at each base point (for each iteration $k$),
 a number $p$ and a distribution $\eta$,
 perform: 
    \begin{compactenum}
    \item Initialize $\We^\ast_{N_t}(x)=0, \forall x\in \X$;
    \item For $k=N_t-1$ to $1$ do
        \begin{compactenum}
        \item Collect samples (cfr. Algorithm \ref{alg:samples});
        \item Estimate $\T \We^{\ast}_{k+1}$ as $\big(\Te \We^\ast_{k+1}\big)(\xb)$ (cfr. \eqref{eq:FVIstep1})
        \item Find the function that minimizes the empirical $p$-norm as
         \begin{align*}\hspace{-1cm}\We^\ast_{k}=\arg \min_{\w \in \M{W}}\sum_{i=1}^{N} \left|\w(\xb)-\Te \We^\ast_{k+1}(\xb) \right|^p;\end{align*}
         \end{compactenum}
            \item Collect $M_0$ samples for the single initial condition $x_0$ and for every action $a\in\A$, and estimate $\We^\ast_0(x_0)$ as in step (2)(b);
    \item
    Return reach-avoid probability
	\begin{align*}\rFVI=\mathbf{1}_{K}(x_0)+\mathbf{1}_{A\setminus K} (x_0) \We^\ast_{0}(x_0), \forall x_0 \in \X.\end{align*}
     \end{compactenum}
 \end{minipage}}\\[1em]
{\rule{.975\linewidth}{1.5pt}}
\end{minipage}
\end{figure}

\begin{remark}[%
Approximately Optimal Policy]\label{rem:pol}
The FVI algorithm can be extended to include the 
synthesis of a policy $a=\hat{\mu}_{k}(x)$. At every iteration in Algorithm \ref{alg:FVI}, first the policy is estimated at all the base points $\xb$, as the argument of (2)(b). 
Secondly a classification algorithm is used, providing an approximately optimal policy $\hat{\mu}_{k}:\X\rightarrow \A$ for each $k$.  
\end{remark}

\section{A-Priori Probabilistic Error Bounds}
\label{SectionFour}
Let us recall the accuracy of the FVI algorithm as in (\ref{eq:acc}):
we say that the FVI algorithm has an accuracy $\acc$, with a confidence $1-\Pacc$,
if the probability that the error made by the approximate solution $\rFVI$ is larger than $\acc$,
is upper-bounded by $\Pacc$.
We explicitly quantify the accuracy in (\ref{eq:acc}) and analyse it in two steps:
first by computing a bound on the error of a single iteration (Sec. \ref{ErrorIteration}); 
then by studying the propagation of the single-step error over multiple iterations (Sec. \ref{ErrorFVI}).
Notice that this section provides a-priori bounds which are model- and 
distribution-free, 
hence computable before applying the FVI algorithm. 
Alternatively, 
a-posteriori error bounds based on an additional sample set are proposed in Section \ref{EmpError}. 

\subsection{Error Bounds on a Single Iteration of the FVI}
\label{ErrorIteration}
Let $\T \We^\ast_{k+1}:\M{X}\rightarrow[0,1]$ be an unknown map,
and consider a function class $\M{W}\subset B(\X;1)$.
Recall that at each iteration $k=N_t-1,N_t-2,...,1$, the objective of the learning algorithm is to find a function $w\in \M{W}$ that is close to $\T \We^\ast_{k+1}$ with respect to the following weighted, $p$-norm:
\begin{align}\label{eq:objective}
\|\w-\T\We_{k+1}^\ast\|_{p,\eta}=\bigg(\int_\X \left|\w-\T\We_{k+1}^\ast\right|^p \eta(x) dx\bigg)^{\mathclap{{\frac{1}{p}}}}.
\end{align}
Notice that if $\T\We_{k+1}^\ast\not\in  \mathcal{\W}$, the optimal approximation
$
\inf_{\w\in \mathcal{\W}} \|\w-\T\We_{k+1}^\ast\|_{p,\eta}$
provides only a \emph{lower bound} on this error:
the presence of this non-zero bias error indicates that the FVI scheme is not asymptotically consistent,
namely that the error
does not converge to zero for an increasing sample size.
Of interest to this work,
a general \emph{upper bound} on $\inf_{\w\in \M{W}}  \|\w-\T \We^\ast_{k+1}\|_{p,\eta}$
is derived as
\begin{equation}\label{eq:inherentBellman}
        d_{p,\eta}(\T \M{W}, \M{W})=\sup_{g\in \M{W}}\inf_{f \in \M{W}}\|f-\T g\|_{p,\eta}.
\end{equation}
In \citep{Remi} the bias $d_{p,\eta}(\T \M{W}, \M{W})$ is referred to as the \emph{inherent Bellman error} of the function space $\M{W}$.

As discussed, the FVI algorithm employs empirical estimates, given in \eqref{eq:FVIstep1}-\eqref{eq:FVIstep2}, of the quantity in \eqref{eq:objective}.
Therefore, the single step error hinges both on the inherent Bellman error,
and on the deviations caused by using estimates of the recursion step $\cramped{\T \We^\ast_{k+1}}$ over the base points (cfr. Section \ref{sec:estimationerror})
and of the norm $\|\cdot\|_{p,\eta}$ as the integral in \eqref{eq:objective} (cfr. Section \ref{sec:empnorm}).
%
%
The error contributions depend on the number of samples used ($N, M$) and on the capacity of the function class $\M{W}$ (Section \ref{sec:empnorm}, and Appendix \ref{proof:NumberSamples}),
whereas they do not depend on the distribution $\eta$, nor on the stochastic state transitions characterizing the model dynamics:
as such the bounds are general and ``distribution-free''. 

In the following subsections, two lemmas are derived, which are necessary to obtain a general upper bound for the single step error.
First (Section \ref{sec:estimationerror}), the error introduced by using the estimation $\Te \We_{k+1}^\ast$ of the recursion step is bounded using Hoeffding's inequality \citep{Hoeffding}.
Then in Section \ref{sec:empnorm} the maximal deviation of the empirical evaluation of the integral in \eqref{eq:objective} is bounded using methods from Statistical Learning Theory \citep{Vapnik}.

\subsubsection{Accuracy of the Estimation of $\cramped{\T \We^\ast_{k+1}}$}\label{sec:estimationerror}


Recall that the estimate $\big(\Te \We^\ast_{k+1}\big)(\xb)$ of the exact recursion $\big(\T \We^\ast_{k+1}\big)(\xb)$ uses,
for a given state-action pair $(\xb,a)$,
the individual Monte-Carlo estimates
\begin{align*}&\frac{1}{M} \sum_{j=1}^{M}\ind_{K}(\xs)+\ind_{A\setminus K}(\xs)\We^\ast_{k+1}(\xs)\end{align*}
of $\E_{x_k^i}^a\!\left[\ind_K(\xknext)+\ind_{A\setminus K}(\xknext)\We^\ast_{k+1}(\xknext)\right]$.
Since the cardinality of the action space $\A$ is finite,
a bound on the error of the estimates for a given state-action pair can lead to a bound on the error over the states $\xb$:
we elaborate on this idea next.

The $M$ random quantities for $1\leq j\leq M$,
$\ind_K(\xs)\allowbreak+  \ind_{A\setminus K}(\xs) \We^\ast_{k+1}(\xs)$, are obtained via independent and identically distributed realizations over the closed interval $[0,1]$.
Hoeffding's inequality \citep{Hoeffding} leads to an upper bound on the deviation of the estimate from the expected value as follows: 
\begin{align*} 
&\mathbf{P}\Big{\{}\Big{|}\E_{x_k^i}^a\!\!\left[\ind_K(\xknext)+ \ind_{A\setminus K}(\xknext) \We^\ast_{k+1}(\xknext)
 \right] \notag \\&-\frac{1}{M}\sum_{j=1}^{M}\left[ \ind_K(\xs)+ \ind_{A\setminus K}(\xs) \We^\ast_{k+1}(\xs)\right]
\Big{|}\leq \epsilon_1 \Big{\}}
\geq 1- 2e^{-2M(\epsilon_1 )^2},
\end{align*}
where $\epsilon_1$ is the bound on the error.\\*
We can then provide a lower bound on the probability that the deviation incurred by the quantity $\big(\Te \We^\ast_{k+1}\big)(\xb)$ is bounded by $\epsilon_1$ via the joint probability of $|\A|$ independent events, as follows:
\begin{align*}
&\hskip-1cm\p{\Big{|}\T \We^\ast_{k+1}(\xb)-\Te\We^\ast_{k+1}(\xb) \Big{|} \leq  \epsilon_1 }\\
&\geq \prod_{a\in\M{A}}
 \mathbf{P}\Big{\{}\Big{|}
 \E_{x_k^i}^a\left[\ind_K(\xknext)+ \ind_{A\setminus K}(\xknext) \We^\ast_{k+1}(\xknext)
 \right]\notag \\
&\hskip2cm-\frac{1}{M}\!\sum_{j=1}^{M}
  \big[ \ind_K(\xs)\!+\! \ind_{A\setminus K}(\xs) \We^\ast_{k+1}(\xs)\big]
\Big{|}\leq  \epsilon_1 \Big{\}}.
\end{align*}
Let us now extend the above probabilistic bound for the error computed at a single point $\xb$ to a bound for the error over all the $N$ base points:
we can express this bound via an empirical $p$-norm defined over the base points,
as follows:
{\interdisplaylinepenalty=10000
\begin{align} \label{eq:empiricalnorm}
&\|\T \We^\ast_{k+1}\!-\!\Te\We^\ast_{k+1}\|_{p,\hat\eta}
=\big(\frac{1}{N}\sum_{i=1}^{N}\left|\T \We^\ast_{k+1}(\xb)-\Te\We^\ast_{k+1}(\xb)\right|^p\big)^{1/p}.
\end{align}}
This leads to the following result.
\begin{lemma}\label{thm:est1}
For a given error bound $\epsilon_1$ and sample complexity $N$ and $M$,
the estimation error can be probabilistically bounded as follows: \begin{equation}
\p{\|\T \We^\ast_{k+1} -\Te \We^\ast_{k+1}\|_{p,\hat{\eta}}\leq \epsilon_1}\geq 1-\delta_1,
\end{equation} 
where $\delta_1=1-(1-2e^{-2M(\epsilon_1)^2})^{|\M{A}|N}$,
as long as $0<2e^{-2M(\epsilon_1)^2}\leq 1$.
\end{lemma}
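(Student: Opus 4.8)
The plan is to bound the ``bad'' event --- that the empirical $p$-norm $\|\T \We^\ast_{k+1} -\Te \We^\ast_{k+1}\|_{p,\hat{\eta}}$ exceeds $\epsilon_1$ --- by the event that the pointwise deviation exceeds $\epsilon_1$ at \emph{some} base point, and then to control the latter by a union bound over the $N$ base points and the $|\M{A}|$ actions, using the Hoeffding estimate already derived in the text. The key observation that makes this work is monotonicity of the empirical $p$-norm: if $\left|\T \We^\ast_{k+1}(\xb)-\Te\We^\ast_{k+1}(\xb)\right|\leq\epsilon_1$ for every $i=1,\dots,N$, then
\[
\|\T \We^\ast_{k+1} -\Te \We^\ast_{k+1}\|_{p,\hat{\eta}}
=\Big(\tfrac{1}{N}\textstyle\sum_{i=1}^N \left|\T \We^\ast_{k+1}(\xb)-\Te\We^\ast_{k+1}(\xb)\right|^p\Big)^{1/p}
\leq \Big(\tfrac{1}{N}\cdot N\cdot \epsilon_1^p\Big)^{1/p}=\epsilon_1 .
\]
Hence the event $\{\|\T \We^\ast_{k+1} -\Te \We^\ast_{k+1}\|_{p,\hat{\eta}}\leq \epsilon_1\}$ contains the intersection $\bigcap_{i=1}^N \{\left|\T \We^\ast_{k+1}(\xb)-\Te\We^\ast_{k+1}(\xb)\right|\leq\epsilon_1\}$, so it suffices to lower-bound the probability of that intersection.

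First I would recall from the preceding display in the text that, for a fixed base point $\xb$ and a fixed action $a\in\A$, Hoeffding's inequality gives that the Monte-Carlo estimate based on $M$ i.i.d.\ samples deviates from its expectation by more than $\epsilon_1$ with probability at most $2e^{-2M\epsilon_1^2}$; since the $M$ samples used for distinct $(\xb,a)$ pairs are drawn independently (a fresh sample set at each iteration, per Algorithm~\ref{alg:samples}), all $|\A|N$ such Monte-Carlo estimates are mutually independent. Next, because $\A$ is finite and $\Te$ takes a maximum over $a\in\A$, the deviation $\left|\T \We^\ast_{k+1}(\xb)-\Te\We^\ast_{k+1}(\xb)\right|$ at a single $\xb$ is bounded by $\epsilon_1$ whenever every one of the $|\A|$ per-action Monte-Carlo estimates at that $\xb$ is within $\epsilon_1$ of its expectation --- this is exactly the product-of-$|\A|$-events bound already stated in the excerpt. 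Combining these, the intersection over all $i$ and $a$ of the ``good'' events has probability at least $\big(1-2e^{-2M\epsilon_1^2}\big)^{|\A|N}$ by independence, so
\[
\p{\|\T \We^\ast_{k+1} -\Te \We^\ast_{k+1}\|_{p,\hat{\eta}}\leq \epsilon_1}\;\geq\;\big(1-2e^{-2M\epsilon_1^2}\big)^{|\A|N}\;=\;1-\delta_1 ,
\]
with $\delta_1$ as defined in the statement; the hypothesis $0<2e^{-2M\epsilon_1^2}\leq 1$ just ensures the base of the power is in $[0,1)$ so that this is a genuine probability bound.

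The routine parts are the Hoeffding application (already carried out in the text) and the union/product bookkeeping. The only point that needs a little care --- and which I would expect to be the main obstacle to a fully rigorous write-up --- is the independence claim: one must be sure that the $M$ samples generating the estimate $\big(\Te \We^\ast_{k+1}\big)(\xb)$ for different pairs $(i,a)$ are drawn independently, so that the joint probability genuinely factorizes as a product of $|\A|N$ terms rather than merely admitting a union bound $1-|\A|N\cdot 2e^{-2M\epsilon_1^2}$. This is guaranteed by the sampling procedure of Algorithm~\ref{alg:samples} (independent draws across base points and actions), and it is what allows the tighter product form $\delta_1=1-(1-2e^{-2M(\epsilon_1)^2})^{|\M{A}|N}$ to appear in place of a cruder union bound. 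A minor subtlety worth a sentence is that $\We^\ast_{k+1}$, although itself random (it depends on the sample sets used at earlier iterations), is independent of the fresh sample set drawn at iteration $k$; conditioning on $\We^\ast_{k+1}$ and noting that the bound is uniform over all $w\in B(\X;1)$ removes this concern.
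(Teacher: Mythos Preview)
Your proposal is correct and follows essentially the same route as the paper's proof: reduce the empirical $p$-norm bound to the intersection of pointwise events at the $N$ base points, bound each pointwise deviation via the $\max$ over actions, apply Hoeffding per $(i,a)$ pair, and exploit the mutual independence of the sample sets across base points and actions to obtain the product form $(1-2e^{-2M\epsilon_1^2})^{|\A|N}$. Your added remark on conditioning on $\We^\ast_{k+1}$ (which depends on earlier, independent sample batches) is a nice clarification that the paper leaves implicit.
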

Notice that for an increasing number of samples $M$,
by Lemma \ref{thm:est1}
the empirical norm of the error as in (\ref{eq:empiricalnorm})
is less than 
$\epsilon_1$ with a probability that increases to $1$.

\subsubsection{Accuracy of the Empirical Norm}
\label{sec:empnorm}


%
Let $\T \We^\ast_{k+1}:\M{X}\rightarrow[0,1]$ be an unknown function
and $\eta$ a probability measure on $\M{X}$, $\eta \in M(\M{X})$.
The objective
is to find a function $w\in \M{W}$ that is close to $\T\We^\ast_{k+1}$ with respect to the following
\emph{expected loss}: \label{page:loss}
\begin{align*}
\inf_{\w\in\mathcal{W}}\!\|w\!-\!\T \We^\ast_{k+1}\|^p_{p,\eta}\! =\inf_{{\w\in\mathcal{W}}} \textstyle\int\limits_{\M{X}}\!\!|w(x)\!-\!\T \We^\ast_{k+1}(x)|^p \eta(x)dx.
 \end{align*}
This section provides a bound for the error originating from the use of a finite number of samples to evaluate the loss:
this \emph{empirical loss} is defined as
 \begin{align*}
 \|\w-\T \We^\ast_{k+1}\|^p_{p,\hat{\eta}}= \frac{1}{N} \sum_{i=1}^{N}|w(\xb)-\T \We^\ast_{k+1}(\xb)|^p ,
\end{align*}
for a given set of $N$ random variables drawn independently over $A \setminus K$ according to $\xb\sim\eta$.
Let us express a probabilistic bound on this error that holds uniformly over all functions $\w\in\M{\W}$ as follows:
\begin{equation*}\mathbf{P}\big\{\sup_{\mathclap{\w \in \M{W}}}\big{|}\|\w-\T \We^\ast_{k+1}\|^p_{p,\eta}\!\! -\! \|\w\!-\!\T \We^\ast_{k+1}\|^p_{p,\hat{\eta}}\big{|}\geq  \epsilon^p_2  \big\}\!\!\leq \delta_2.\end{equation*}%
Observe that since the expected and the empirical losses can be reformulated respectively as the mean and the empirical mean of a loss function,
defined informally as $f (x) = |w (x) - \T \We^\ast_{k+1}(x)|^p$, the above problem can be framed
as the uniform convergence of a standard learning problem 
\citep{Haussler,Pollard1984,Vapnik}.
Resorting to related literature,
results on bounds for the
error probability of the regression of real-valued functions employ 
  capacity concepts (including Rademacher averages, covering numbers, and pseudo dimensions) of a function class \citep{LocalRademacherComplexities,Hoeffding}. 
We focus on pseudo dimensions to deal with the capacity (or the complexity) of the function class.
By results in \citep{Haussler} 
and \citep{Pollard1984}, 
we obtain the following uniform convergence bound. 

\begin{lemma}\label{thm:NumberSamples}
Let $\M{W}$ be a set of finitely parameterized, measurable functions on $\M{X}$ taking values in the interval $[0,1]$ with pseudo dimension $\dim_p\left(\M{W} \right)=d<\infty$.
Let $(\xb)_{1\leq i \leq N}$ be generated by $N$ independent draws according to any distribution $\eta$ on $A\setminus K$ and let $p\geq 1$.
Then for any $\epsilon_2>0$ we have that
\begin{align}\label{mythm:1} 
&  \p{\sup_{\w \in \M{W}}\big{|}\|\w-\T \We^\ast_{k+1}\|^p_{p,\eta}-\|\w-\T \We^\ast_{k+1}\|^p_{p,\hat{\eta}}\big{|}\geq  \epsilon^p_2  }
\leq 4 e(d+1)\left(\frac{32 e}{\epsilon_2^p} \right)^d e^{-\frac{N\epsilon_2^{2p}}{128}}.
\end{align}
\end{lemma}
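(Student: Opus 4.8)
The plan is to read the left-hand side of \eqref{mythm:1} as a classical uniform deviation between an empirical and a true mean, taken over the \emph{loss class}
$\M{F}=\{\,f_w:x\mapsto |w(x)-\T\We^\ast_{k+1}(x)|^p \ \mid\ w\in\M{W}\,\}$,
and then to combine a Pollard-type uniform convergence inequality with Haussler's bound on covering numbers in terms of the pseudo dimension. The structural facts needed are already available: since $w\in\M{W}\subset B(\X;1)$ and $\T\We^\ast_{k+1}:\M{X}\to[0,1]$, every $f_w$ maps into $[0,1]$, so $\M{F}$ is a uniformly bounded class; and the base points $(\xb)_{1\le i\le N}$ are i.i.d.\ draws from $\eta$ on $A\setminus K$, so the samples feeding the empirical norm are i.i.d. With this reformulation, $\|\w-\T\We^\ast_{k+1}\|_{p,\eta}^p=\E[f_w(\xb)]$ and $\|\w-\T\We^\ast_{k+1}\|_{p,\hat\eta}^p=\frac1N\sum_{i=1}^N f_w(\xb)$, and the probability in question is exactly $\p{\sup_{f\in\M{F}}|\E f-\frac1N\sum_i f|\ge \epsilon_2^p}$.

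The first substantive step is to control the capacity of $\M{F}$ by that of $\M{W}$. The map $t\mapsto|t-c|^p$ is Lipschitz with constant $p$ on $[0,1]$ for every $c\in[0,1]$, so for any $w,w'\in\M{W}$ and any sample $x^{1:N}$ one has $\frac1N\sum_i|f_w(\xb)-f_{w'}(\xb)|\le \frac pN\sum_i|w(\xb)-w'(\xb)|$; hence an empirical $L_1$-cover of $\M{W}$ at scale $\epsilon/p$ yields an empirical $L_1$-cover of $\M{F}$ at scale $\epsilon$, i.e.\ $\M{N}_1(\epsilon,\M{F},x^{1:N})\le\M{N}_1(\epsilon/p,\M{W},x^{1:N})$. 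By Haussler's bound, a class of $[0,1]$-valued functions with pseudo dimension $d$ satisfies $\M{N}_1(\epsilon,\M{W},x^{1:N})\le e(d+1)(2e/\epsilon)^d$ uniformly over the sample, so $\M{N}_1(\epsilon,\M{F},x^{1:N})$ is bounded by a quantity depending only on $d$, $p$ and $\epsilon$, and — crucially — neither on the unknown target $\T\We^\ast_{k+1}$ nor on $\eta$, which is what makes the final bound distribution-free. The second step is to invoke Pollard's uniform convergence inequality for bounded function classes, $\p{\sup_{f\in\M{F}}|\E f-\tfrac1N\sum_i f(\xb)|\ge\epsilon}\le 8\,\M{N}_1(\epsilon/8,\M{F},x^{1:N})\,e^{-N\epsilon^2/128}$, itself proved along the standard route (symmetrization against a ghost sample, conditioning on the pooled sample, passing from the supremum to a maximum over a finite $(\epsilon/8)$-cover, Hoeffding on each cover element, union bound). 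Setting $\epsilon=\epsilon_2^p$ and substituting the covering-number estimate of the first step, with the numerical constants and the Lipschitz factor $p$ absorbed, produces the stated $4e(d+1)\left(32e/\epsilon_2^p\right)^d e^{-N\epsilon_2^{2p}/128}$.

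The main obstacle — and the only genuinely delicate point — is the first step: showing that composing $\M{W}$ with the fixed but \emph{unknown and non-monotone} map $\xi\mapsto|\xi-\T\We^\ast_{k+1}(x)|^p$ does not inflate the capacity. Pseudo dimension is not, in general, preserved under such V-shaped compositions, so I would deliberately route the argument through \emph{covering numbers} via the Lipschitz estimate above rather than attempt a direct pseudo-dimension bound on $\M{F}$; this keeps every constant explicit, keeps the estimate uniform over the (adversarial) choice of $\T\We^\ast_{k+1}$, and is precisely the device underlying the regression bounds of \citep{Remi}. The remaining bookkeeping — checking boundedness in $[0,1]$, the i.i.d.\ property of the base points, and the arithmetic of plugging $\epsilon=\epsilon_2^p$ into Pollard's inequality — is routine.
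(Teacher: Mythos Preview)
Your overall architecture matches the paper's: introduce the loss class $\M{F}=\{|w-\T\We^\ast_{k+1}|^p:w\in\M{W}\}\subset B(\X;1)$, rewrite the event as a uniform deviation over $\M{F}$, apply Pollard's inequality, and bound the resulting empirical $L_1$ covering numbers via Haussler's pseudo-dimension bound. The genuine divergence is in how you control the capacity of $\M{F}$. The paper does \emph{not} go through a Lipschitz covering-number comparison; it asserts directly that $\dim_p(\M{F})=\dim_p(\M{W})=d$, in two steps: translation invariance gives $\dim_p(\{w-\T W\})=\dim_p(\M{W})$ (Haussler), and then invariance under the fixed scalar composition $|\cdot|^p$ is invoked from \citep{Kearns1990}. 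Haussler's covering bound is then applied to $\M{F}$ itself at scale $\epsilon_2^p/16$, which is exactly what produces $4e(d+1)(32e/\epsilon_2^p)^d e^{-N\epsilon_2^{2p}/128}$.

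Your Lipschitz route is sound and, as you say, sidesteps the delicate question of whether pseudo dimension really survives the V-shaped map $|\cdot|^p$; that is a real advantage in robustness. The price is the constants: the Lipschitz constant of $t\mapsto|t-c|^p$ on $[0,1]$ is $p$, so your covering estimate reads $\M{N}_1(\epsilon,\M{F},x^{1:N})\le\M{N}_1(\epsilon/p,\M{W},x^{1:N})\le e(d+1)(2pe/\epsilon)^d$, and after plugging into Pollard (either version) you pick up an extra factor $p^d$ inside the parenthesis. Your phrase ``with the numerical constants and the Lipschitz factor $p$ absorbed'' is therefore not quite honest for general $p\ge 1$: you obtain a bound of the form $C\,e(d+1)(C'pe/\epsilon_2^p)^d e^{-N\epsilon_2^{2p}/128}$, which matches the lemma only up to that $p$-dependent multiplicative constant. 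If you want the \emph{stated} constants, you need the paper's pseudo-dimension-preservation step; if you are content with a bound of the same shape, your argument is complete and arguably cleaner.
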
\mbox{ }
The characterization and computability of the pseudo dimension of a function class is given in Appendix \ref{proof:NumberSamples}.
%

\subsubsection{Global Accuracy of Single Iterations}

The error bounds introduced in Lemmas \ref{thm:est1} and \ref{thm:NumberSamples},
together with the inherent Bellman error,
yield an upper bound on the error introduced at each iteration, which is recapitulated in the following statement.
Since the bounds are independent of the distributions $\eta$ and $\Tx{\cdot}{x,a}$,
they are in fact \emph{distribution-free} bounds \citep{LocalRademacherComplexities}.

\begin{theorem}\label{lem:errorsinglestep}
Consider a reach-avoid property defined over a general stochastic system with continuous state space $\M{X}$ and finite action space $\M{A}$. Let $A\subset \M{X}$ and $K\subset \M{X}$ be Borel measurable sets and fix $p\geq1$,
the distribution $\eta \in M(A\setminus K)$ and $\M{W} \subset B(\M{X};1)$.
Pick any $\We^\ast_{k+1}\in B(\M{X};1)$ and let $\Te \We^\ast_{k+1}$ and $\We^\ast_{k}$  be calculated using (\ref{eq:FVIstep1}) and (\ref{eq:FVIstep2}).
For given upper bounds on
\begin{align}\label{prob:1}
&\mathbf{P}\big\{\|\T \We^\ast_{k+1} -\Te \We^\ast_{k+1} \|_{p,\hat{\eta}}> \epsilon_1\big\}\leq \delta_1, \textmd{ and}\\
&\mathbf{P}\big\{\sup_{\w \in \M{W}}\big{|}\|\w-\T \We^\ast_{k+1}\|_{p,\eta}^p-\|w-\T \We^\ast_{k+1}\|^p_{p,\hat{\eta}}\big{|}> \epsilon_2^p\big\} 
\label{prob:2} 
\leq \delta_2,
\shortintertext{the bound on a single step update is as follows:}
&\p{\|\We^\ast_{k}-\T \We^\ast_{k+1}\|_{p,\eta}> d_{p,\eta}(\T \We^\ast_{k+1},\M{W}) + \epsilon}
 \label{thm:error} 
 \leq \delta_1+\delta_2,\end{align}
where the optimal approximation is given as a biasing term defined as $d_{p,\eta}(\T \We^\ast_{k+1},\M{W})=\inf_{\w\in \mathcal{\W}} \|\w-\T\We_{k+1}^\ast\|_{p,\eta}$,
and the error $\epsilon$ is given as $\epsilon=2\epsilon_1+2\epsilon_2$.
\end{theorem}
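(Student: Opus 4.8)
The plan is to prove the bound on the \emph{good event} on which both concentration inequalities hold, and then conclude with a union bound. Write $\mathcal{E}_1$ for the event $\{\|\T\We^\ast_{k+1}-\Te\We^\ast_{k+1}\|_{p,\hat\eta}\le\epsilon_1\}$ controlled by (\ref{prob:1}), and $\mathcal{E}_2$ for the event $\{\sup_{\w\in\M{W}}|\,\|\w-\T\We^\ast_{k+1}\|_{p,\eta}^p-\|\w-\T\We^\ast_{k+1}\|_{p,\hat\eta}^p\,|\le\epsilon_2^p\}$ controlled by (\ref{prob:2}). By hypothesis $\p{\mathcal{E}_1^{c}}\le\delta_1$ and $\p{\mathcal{E}_2^{c}}\le\delta_2$, so $\p{\mathcal{E}_1\cap\mathcal{E}_2}\ge 1-\delta_1-\delta_2$. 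It therefore suffices to establish the purely deterministic implication: on $\mathcal{E}_1\cap\mathcal{E}_2$ one has $\|\We^\ast_{k}-\T\We^\ast_{k+1}\|_{p,\eta}\le d_{p,\eta}(\T\We^\ast_{k+1},\M{W})+2\epsilon_1+2\epsilon_2$, after which (\ref{thm:error}) follows immediately with $\epsilon=2\epsilon_1+2\epsilon_2$.

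For the deterministic step I would chain four elementary estimates, using repeatedly that $t\mapsto t^{1/p}$ is subadditive, i.e. $(a+b)^{1/p}\le a^{1/p}+b^{1/p}$ for $a,b\ge0$, and that the empirical $p$-norm $\|\cdot\|_{p,\hat\eta}$, being an $\ell^p$-norm for the discrete probability measure $\tfrac1N\sum_{i=1}^N\delta_{\xb}$, satisfies the triangle (Minkowski) inequality. \emph{(i)} Applying $\mathcal{E}_2$ with the particular choice $\w=\We^\ast_{k}\in\M{W}$ passes from the true to the empirical norm: $\|\We^\ast_{k}-\T\We^\ast_{k+1}\|_{p,\eta}\le(\|\We^\ast_{k}-\T\We^\ast_{k+1}\|_{p,\hat\eta}^p+\epsilon_2^p)^{1/p}\le\|\We^\ast_{k}-\T\We^\ast_{k+1}\|_{p,\hat\eta}+\epsilon_2$. \emph{(ii)} Minkowski in $\|\cdot\|_{p,\hat\eta}$ followed by $\mathcal{E}_1$ gives $\|\We^\ast_{k}-\T\We^\ast_{k+1}\|_{p,\hat\eta}\le\|\We^\ast_{k}-\Te\We^\ast_{k+1}\|_{p,\hat\eta}+\|\Te\We^\ast_{k+1}-\T\We^\ast_{k+1}\|_{p,\hat\eta}\le\|\We^\ast_{k}-\Te\We^\ast_{k+1}\|_{p,\hat\eta}+\epsilon_1$. \emph{(iii)} Since $\We^\ast_{k}$ minimizes the empirical objective of (\ref{eq:FVIstep2}), which equals $N\|\w-\Te\We^\ast_{k+1}\|_{p,\hat\eta}^p$, for \emph{every} $\w\in\M{W}$ we have $\|\We^\ast_{k}-\Te\We^\ast_{k+1}\|_{p,\hat\eta}\le\|\w-\Te\We^\ast_{k+1}\|_{p,\hat\eta}\le\|\w-\T\We^\ast_{k+1}\|_{p,\hat\eta}+\epsilon_1$, again by Minkowski and $\mathcal{E}_1$. \emph{(iv)} Applying $\mathcal{E}_2$ once more, now in its uniform (supremum over $\M{W}$) form, returns the fixed $\w$ to the true norm: $\|\w-\T\We^\ast_{k+1}\|_{p,\hat\eta}\le\|\w-\T\We^\ast_{k+1}\|_{p,\eta}+\epsilon_2$.

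Concatenating (i)--(iv) yields, for every $\w\in\M{W}$, the inequality $\|\We^\ast_{k}-\T\We^\ast_{k+1}\|_{p,\eta}\le\|\w-\T\We^\ast_{k+1}\|_{p,\eta}+2\epsilon_1+2\epsilon_2$. The left-hand side does not depend on $\w$, so taking the infimum over $\w\in\M{W}$ on the right and recalling $d_{p,\eta}(\T\We^\ast_{k+1},\M{W})=\inf_{\w\in\M{W}}\|\w-\T\We^\ast_{k+1}\|_{p,\eta}$ gives the deterministic bound, and with $\p{\mathcal{E}_1\cap\mathcal{E}_2}\ge1-\delta_1-\delta_2$ this is exactly (\ref{thm:error}). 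I do not anticipate a serious obstacle: the only delicate points are that $\mathcal{E}_2$ must be invoked in its uniform form (because $\We^\ast_k$ is itself a function of the drawn samples and so cannot be treated as a fixed test function), and the bookkeeping of the Minkowski and $p$-th-root steps, which is precisely where the factors $2$ multiplying $\epsilon_1$ and $\epsilon_2$ arise. Note also that the argument uses neither that the infimum defining $d_{p,\eta}$ is attained nor any structural property of the kernel $\Txo$ or of $\eta$ beyond what is already encoded in (\ref{prob:1})--(\ref{prob:2}), which is the source of the claimed distribution-freeness.
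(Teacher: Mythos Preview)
Your proof is correct and follows essentially the same approach as the paper: the same five-step chain passing from $\|\cdot\|_{p,\eta}$ to $\|\cdot\|_{p,\hat\eta}$, inserting $\Te\We^\ast_{k+1}$, using optimality of $\We^\ast_k$, and reversing, with a union bound over the two concentration events. The only cosmetic differences are that the paper derives the norm-transfer step via the inequality $|a-b|^p\le|a^p-b^p|$ rather than via subadditivity of $t\mapsto t^{1/p}$ (equivalent formulations), and that the paper handles the infimum by fixing a near-minimizer $w^\ast$ with slack $\epsilon'>0$ and then letting $\epsilon'\downarrow0$, whereas you take the infimum directly at the end.
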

%
Note that the statement assumes that $\We^\ast_k$  is the unique solution to the optimization problem in (\ref{eq:FVIstep2}).
This strict assumption can be weakened by adding a tolerance term to the theorem.
The quantity $d_{p,\eta}(\T \We^\ast_{k+1},\M{W})$ admits the inherent Bellman error $d_{p,\eta}(\T \M{W},\M{W})$ in \eqref{eq:inherentBellman} as a general upper bound.
\subsection{Error Propagation 
 and Global Error Bounds}
\label{ErrorFVI}

We express a global bound on the accuracy of the FVI algorithm \(\left|\rFVI-r_{x_0}^{\ast}(K,A)\right|\)
by using the probabilistic bounds (derived in Section \ref{ErrorIteration}) on the errors introduced by the approximate one-step mappings
\(|\big(\Te\We_1^\ast\big)(x_0)-\big(\T\We_1^\ast\big)(x_0)|\),
as well as
\(\|\We_1^\ast-\T\We_2^\ast\|_{p,\eta}\),
\(\|\We_2^\ast-\T\We_3^\ast\|_{p,\eta}\),
\ldots,
\(\|\We_{N_t-1}^\ast-\T\We_{N_t}^\ast\|_{p,\eta}\),
and by propagating the error introduced by each single iteration to the successive value iterations,
as done in the next statement.
More precisely, in the following lemma we show that the deviation of the approximate value function $\We^\ast_{k}$ from the optimal value function $\cramped{\T^{N_t-k}\W_{N_t}^\ast}$ can be expressed as a function of this deviation
at step $k+1$,
plus the approximation error introduced at the $(N_t-k)$th iteration (which has been bounded above).
Recall that the optimal value functions can be written as $\cramped{\W^\ast_{k}=\T^{N_t-k}\W_{N_t}^\ast}$ and that by definition $\cramped{\W_{N_t}^\ast = \We_{N_t}^\ast}$.


\begin{lemma}\label{lem:errorpropagation}
Let $\eta$ be the density of a probability distribution with support on $A\setminus K$ and $\txo$ be the density function of the stochastic kernel $\Txo$.
Then
\begin{align}&\|\We^\ast_k-\T^{N_t-k}\We^\ast_{N_t}\|_{p,\eta}\leq \|\We^\ast_{k}-\T \We^\ast_{k+1}\|_{p,\eta}
+B^{\frac{1}{p}}\left\|\We^\ast_{k+1}-\T^{N_t-(k+1)}\We^\ast_{N_t}\right\|_{p,\eta},\notag
\shortintertext{for \(k=0,1,\ldots,N_t-1\), and where $B$ is defined as}
\label{eq:scaling_B}
 &\!\!B=\!\!\! \sup_{\cramped[\scriptstyle]{x_{k+1}\in A \setminus K}} {\int\nolimits_{A\setminus K}}\!\!\! \frac{{\max\limits_{a\in \M{A}}}\ \tx{x_{k+1}\!}{\!x_k,a}\eta(x_k)}{\eta(x_{k+1})}dx_k.
 \notag
 \end{align}%
\end{lemma}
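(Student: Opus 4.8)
The plan is to reduce the claim to a single contraction-type estimate for the dynamic programming operator $\T$ in the weighted norm $\|\cdot\|_{p,\eta}$, and then close with the triangle inequality. First I would use $\T^{N_t-k}\We^\ast_{N_t}=\T\bigl(\T^{N_t-(k+1)}\We^\ast_{N_t}\bigr)$, set $g=\We^\ast_{k+1}$ and $h=\T^{N_t-(k+1)}\We^\ast_{N_t}$, and apply the triangle inequality for $\|\cdot\|_{p,\eta}$ to get
\[
\|\We^\ast_k-\T^{N_t-k}\We^\ast_{N_t}\|_{p,\eta}\le \|\We^\ast_k-\T\We^\ast_{k+1}\|_{p,\eta}+\|\T g-\T h\|_{p,\eta}.
\]
The first term on the right is exactly the single-iteration error already appearing in the statement, so everything reduces to proving the operator bound $\|\T g-\T h\|_{p,\eta}\le B^{1/p}\,\|g-h\|_{p,\eta}$ for arbitrary $g,h\in B(\X;1)$.

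For that bound I would argue pointwise. From the definition \eqref{eq:T} of $\T$ and the elementary inequality $|\max_a f_a-\max_a g_a|\le\max_a|f_a-g_a|$ (the $\ind_K$ contributions cancel in the difference), one obtains, for every $x_k\in A\setminus K$,
\[
\bigl|(\T g)(x_k)-(\T h)(x_k)\bigr|\le \max_{a\in\A}\int_{A\setminus K}\tx{x_{k+1}}{x_k,a}\,\bigl|g(x_{k+1})-h(x_{k+1})\bigr|\,dx_{k+1}.
\]
The key step is then, after raising to the $p$-th power, to apply Jensen's inequality \emph{for each action $a$ separately}: since $\tx{\cdot}{x_k,a}$ is a genuine probability density on $\X$, this passes the $p$-th power inside the integral without producing any spurious normalising factor. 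Using $(\max_a c_a)^p=\max_a c_a^p$ and $\max_a\int(\cdot)\le\int\max_a(\cdot)$ for nonnegative integrands yields the pointwise estimate $|(\T g)(x_k)-(\T h)(x_k)|^p\le\int_{A\setminus K}\max_{a\in\A}\tx{x_{k+1}}{x_k,a}\,\bigl|g(x_{k+1})-h(x_{k+1})\bigr|^p\,dx_{k+1}$. Finally I would integrate against $\eta(x_k)\,dx_k$ over $A\setminus K$, invoke Fubini--Tonelli to exchange the integration order, and multiply and divide by $\eta(x_{k+1})$; the remaining inner $x_k$-integral is then bounded above by $B$ uniformly in $x_{k+1}$, giving $\|\T g-\T h\|_{p,\eta}^p\le B\,\|g-h\|_{p,\eta}^p$, and hence the lemma after combining with the displayed triangle inequality.

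I expect the only genuinely delicate point to be the ordering of the two ``worst-case'' operations in the middle of the argument: Jensen must be applied per action, where the transition kernel is a probability measure, \emph{before} the maximum over $\A$ is pulled outside the integral. Carrying these out in the opposite order would replace $B$ by a strictly larger constant carrying an extra factor $\bigl(\int_{A\setminus K}\max_{a\in\A}\tx{x_{k+1}}{x_k,a}\,dx_{k+1}\bigr)^{p-1}$, so the stated form of $B$ genuinely relies on this sequencing. A minor bookkeeping point is that, since $\eta$ is supported on $A\setminus K$, all norm integrals are effectively over $A\setminus K$, and the indicator $\ind_{A\setminus K}$ emerging from $\T$ keeps the inner integrals consistently restricted to that set.
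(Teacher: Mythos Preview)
Your proposal is correct and follows essentially the same route as the paper: split off the single-step error via the triangle inequality, bound $|\T g-\T h|$ pointwise using $|\max_a f_a-\max_a g_a|\le\max_a|f_a-g_a|$, apply Jensen (equivalently, monotonicity of $L_p$-norms under a probability measure) \emph{per action}, push the maximum inside the integral, then use Fubini and the multiply-and-divide by $\eta(x_{k+1})$ trick to extract~$B$. The only cosmetic difference is that you package the second half as a standalone contraction estimate $\|\T g-\T h\|_{p,\eta}\le B^{1/p}\|g-h\|_{p,\eta}$, whereas the paper carries the specific functions $\We^\ast_{k+1}$ and $\T^{N_t-k-1}\We^\ast_{N_t}$ through the same chain of inequalities inline.
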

%
Putting all the pieces together,
the following theorem provides an expression for the global FVI error bound as the accumulation of the errors from the single iterations over the whole time horizon.
\begin{theorem}\label{thm:MultiStep}
Consider a reach-avoid problem defined on a Markov process with a continuous state space $\M{X}$ and a finite action space $\M{A}$.
The optimal reach-avoid probability $r^{\ast}_{x_{0}}(K,A)$ for a given target set $K$, safe set $A$, initial state $x_0$, and time horizon $N_t$,
is approximated by the quantity $\rFVI$ obtained with the FVI Algorithm in Algorithm \ref{alg:FVI},
which has an accuracy of $\acc$ and a confidence $\Pacc$,
as stated in (\ref{eq:acc}), if the following holds:
\begin{align} 
&\mathbf{P}\bigg{\{}B_0\sum_{k=1}^{N_t-1}B^{\frac{k-1}{p}}\|\We^\ast_{k}-\T \We^\ast_{k+1}\|_{p,\eta}\hspace{3cm} 
\\&\label{eq:Multi_step_state} \hspace{1.5cm}
+|\We^\ast_0(x_0)-\T \We^\ast_1(x_0)|>\acc\bigg{\}}\leq \Pacc,
\shortintertext{where $B$ is given in \eqref{eq:scaling_B}, $B_0$ is defined as}
\label{eq:scaling_state}
&B_0=\sup_{x_1\in A\setminus K} \max_{a\in \M{A}}\frac{\tx{x_1}{x_0,a}}{\eta(x_1)},
\end{align} 
and where $\eta$ is the density of a probability distribution supported on $A\setminus K$
and $\txo$ is the density of the stochastic kernel $\Txo$ of the given Markov process.
\end{theorem}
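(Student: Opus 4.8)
The plan is to derive the global bound in Theorem~\ref{thm:MultiStep} by unrolling the one-step recursion of Lemma~\ref{lem:errorpropagation} down to the initial time, and then to translate the resulting deterministic inequality on the $p$-norms into the probabilistic statement by a union bound over the single-iteration error events controlled by Theorem~\ref{lem:errorsinglestep}. Concretely, I would first apply Lemma~\ref{lem:errorpropagation} at $k=1$ to write
\begin{align*}
\|\We^\ast_1-\T^{N_t-1}\We^\ast_{N_t}\|_{p,\eta}\leq \|\We^\ast_1-\T\We^\ast_2\|_{p,\eta}+B^{\frac1p}\|\We^\ast_2-\T^{N_t-2}\We^\ast_{N_t}\|_{p,\eta},
\end{align*}
and then iterate the same inequality on the trailing term for $k=2,\ldots,N_t-1$. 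Since $\We^\ast_{N_t}=\W^\ast_{N_t}$, the innermost term vanishes, and the geometric accumulation of the factors $B^{\frac1p}$ yields
\begin{align*}
\|\We^\ast_1-\T^{N_t-1}\We^\ast_{N_t}\|_{p,\eta}\leq \sum_{k=1}^{N_t-1}B^{\frac{k-1}{p}}\|\We^\ast_k-\T\We^\ast_{k+1}\|_{p,\eta}.
\end{align*}

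The second step is to pass from the $p$-norm bound on $\We^\ast_1$ to a pointwise bound at the initial condition $x_0$, accounting for the last (pointwise, Monte-Carlo) approximation $\big(\Te\We^\ast_1\big)(x_0)$ done with $M_0$ samples and for the structure of $r^\ast_{x_0}(K,A)=\ind_K(x_0)+\ind_{A\setminus K}(x_0)\W^\ast_0(x_0)$. Writing
\begin{align*}
|\We^\ast_0(x_0)-\W^\ast_0(x_0)|\leq |\We^\ast_0(x_0)-\T\We^\ast_1(x_0)|+|\T\We^\ast_1(x_0)-\T\W^\ast_1(x_0)|,
\end{align*}
the first summand on the right is the last-step empirical error (the $M_0$-sample term appearing in \eqref{eq:Multi_step_state}), while the second is bounded by a one-step analogue of the $B$-scaling in Lemma~\ref{lem:errorpropagation}: since $\T$ is monotone and averages with weight $\ind_{A\setminus K}\tx{\cdot}{x_0,a}$, one gets $|\T\We^\ast_1(x_0)-\T\W^\ast_1(x_0)|\leq \int_{A\setminus K}\max_a \tx{x_1}{x_0,a}\,|\We^\ast_1(x_1)-\W^\ast_1(x_1)|dx_1$, and by H\"older (or directly, since densities are nonnegative and $\eta$ is supported on $A\setminus K$) this is at most $B_0$ times a $p$-norm-type quantity in $\|\We^\ast_1-\T^{N_t-1}\We^\ast_{N_t}\|_{p,\eta}$, with $B_0$ as in \eqref{eq:scaling_state}. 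Combining with the geometric bound from the first step gives the deterministic inequality
\begin{align*}
|\We^\ast_0(x_0)-\W^\ast_0(x_0)|\leq |\We^\ast_0(x_0)-\T\We^\ast_1(x_0)|+B_0\sum_{k=1}^{N_t-1}B^{\frac{k-1}{p}}\|\We^\ast_k-\T\We^\ast_{k+1}\|_{p,\eta},
\end{align*}
and since $\ind_{A\setminus K}(x_0)\in\{0,1\}$ this controls $|\rFVI-r^\ast_{x_0}(K,A)|$ by the same right-hand side.

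The final step is the probabilistic one: the event in \eqref{eq:Multi_step_state} is implied by the event that the deterministic upper bound just derived exceeds $\acc$, which in turn is contained in the union of the single-iteration ``bad'' events at each $k=1,\ldots,N_t-1$ (each of probability at most $\delta_1+\delta_2$ by Theorem~\ref{lem:errorsinglestep}, applied with the appropriate split of the per-step budget) together with the bad event for the final $M_0$-sample estimate at $x_0$ (a Hoeffding bound as in Lemma~\ref{thm:est1} with $N=1$). A union bound then gives $\Pacc$ as the sum of these per-step confidences, and choosing the per-step accuracies $\epsilon_1,\epsilon_2$ so that the weighted sum of biases plus errors meets the target $\acc$ closes the argument; this is exactly the content of \eqref{eq:Multi_step_state}–\eqref{eq:scaling_state}, stated in the implicit form ``if these $N,M,M_0$ yield such probabilities, then the accuracy/confidence pair holds.''

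I expect the main obstacle to be the second step: correctly handling the change of measure between the ``uniform'' pointwise bound coming out of the $\T$-recursion and the $\eta$-weighted $p$-norm in which the single-iteration errors are controlled. The subtlety is that Lemma~\ref{lem:errorpropagation} already bakes in the $B$-factor to perform this reweighting at interior steps, but the last step from $x_1$ back to the fixed point $x_0$ needs a separate constant $B_0$ (no $\eta(x_0)$ in the denominator because $x_0$ is deterministic), and one must be careful that the supremum defining $B$ and $B_0$ is over $A\setminus K$ only, matching the support of $\eta$ and the domain on which the indicator products are nonzero. Making the telescoping clean — in particular verifying that the trailing term genuinely vanishes because $\We^\ast_{N_t}=\W^\ast_{N_t}=0$ and that the exponents on $B$ line up as $\frac{k-1}{p}$ rather than $\frac{k}{p}$ — is the other place where care is needed, but it is bookkeeping rather than a conceptual difficulty.
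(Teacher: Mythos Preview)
Your proposal is correct and follows essentially the same route as the paper's proof: unroll Lemma~\ref{lem:errorpropagation} to bound $\|\We^\ast_1-\T^{N_t-1}\We^\ast_{N_t}\|_{p,\eta}$ by the geometric sum, then handle the last step at the deterministic point $x_0$ via the separate constant $B_0$ (the paper does this by writing $\tx{x_1}{x_0,a}\leq B_0\,\eta(x_1)$ to get the $1$-norm, then invoking monotonicity of $L_p$-norms over a probability measure rather than H\"older, but the effect is the same). One remark: your ``final step'' with the union bound over Theorem~\ref{lem:errorsinglestep} and the $M_0$-sample Hoeffding bound is not actually part of the proof of Theorem~\ref{thm:MultiStep} as stated---the theorem is conditional (``if \eqref{eq:Multi_step_state} holds, then \eqref{eq:acc} holds''), so once the deterministic inequality $|\rFVI-r^\ast_{x_0}(K,A)|\leq |\We^\ast_0(x_0)-\T\We^\ast_1(x_0)|+B_0\sum_k B^{(k-1)/p}\|\We^\ast_k-\T\We^\ast_{k+1}\|_{p,\eta}$ is in hand, the conclusion follows immediately by monotonicity of probability; the union-bound discussion you sketch is the content of the subsequent subsection rather than the proof itself.
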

%
%
\begin{remark}
Notice that the scaling factor $B$ has a significant influence on the error propagation. 
It is related to the notion of \emph{concentrability} of the future-state distribution \citep{Remi,Future}. 
If $B>1$ the error of the algorithm will increase exponentially with the time horizon, 
whereas if $B<1$ the accuracy of the algorithm will depend mostly on the errors in the last few iterations (backwards in time).  
%
It $B$ expresses the maximal concentration of the dynamics (relative to $\eta$) over the relevant set $A\setminus K$  after one transition starting from distribution $\eta$.  
 \end{remark}

The case study discussed in Section \ref{SectionFive} displays a choice of a density function $\eta$ leading to bounded values for $B$ and $B_0$, respectively. 
The relation between the scaling factor $B$ and the model dynamics for the considered Markov process is also analyzed.  

\subsubsection{Discussion on the Global Error Bounds} 
Suppose that we require that the error in the estimation of $r^\ast_{x_0}(K,A)$,
with a confidence at least equal to $\alpha$,
is less than $\acc$, as per \eqref{eq:acc}.
Let us assume that there exist positive values $\epsilon_0, \epsilon_1, \epsilon_2$ such that
the approximation error $\acc$ can be split up into individual bounds based on Theorem \ref{lem:errorsinglestep} and \ref{thm:MultiStep} as follows:
\begin{align*}
\acc=&B_0\textstyle\sum\limits_{k=1}^{N_t-1}B^{\frac{k-1}{p}}d_{p,\eta}\left(\T \M{W},\M{W}\right)
+2B_0\textstyle\sum\limits_{k=1}^{N_t-1}B^{\frac{k-1}{p}}\left(\epsilon_1\right)
\\[-0em] 
&+2B_0\textstyle\sum\limits_{k=1}^{N_t-1}B^{\frac{k-1}{p}}\left(\epsilon_2\right)
+  \epsilon_0,                                        
\end{align*}
accounting for, respectively, the inherent Bellman error (Section \ref{ErrorIteration}); the error on the estimation of $\T \We_{k+1}^{\ast}$ (Section \ref{sec:estimationerror}); the error on the empirical norm (Section \ref{sec:empnorm}); and the  error related to $|\We^\ast_0(x_0)-\T \We^\ast_1(x_0)|$.
We compute a lower bound on the confidence that the approximation error is bounded by $\acc$ by upper bounding the complementary event.
Using a union bounding argument on the probability of invalidating any of the bounding terms in the above equation,
we obtain that the probability that the overall error is larger than $\Delta$ is upper bounded with $\Pacc$, as
\begin{align}\label{eq:confidenceerrorbound}
\Pacc=&0 
\textstyle+(N_t-1)\big(1-(1-2e^{-2M(\epsilon_1)^2})^{|\A|N}\big)\\&
\textstyle+\cramped{(N_t-1)\Big(4 e (d+1)\left(\frac{32e}{\epsilon_2^p}\right)^de^{-\frac{N\epsilon_{2}^{2p}}{128}}\Big)}\notag \\
&+\textstyle 1-(1-2e^{-2M_0(\epsilon_0)^2})^{|\A|} 
, \notag  
\end{align}
with respectively the inherent Bellman error (Section \ref{ErrorIteration}), the confidence terms from  Lemma \ref{thm:est1} with sample complexity $M$ and $N$, from Lemma
\ref{thm:NumberSamples} with $\pseudo{\M{W}}=d<\infty$,  and from  Lemma \ref{thm:est1} with $M= M_0$ and $N=1$.
Eqn. (\ref{eq:confidenceerrorbound}) holds as long as $0<2e^{-2M(\epsilon_1)^2}\leq1$ and  $0<2e^{-2M_0(\epsilon_0)^2}\leq1$.
We can observe that it is possible to find finite values for $N$,$M$,$M_0$ such that $1-\alpha> \Pacc$.
Moreover for each choice of positive $\epsilon_0, \epsilon_1, \epsilon_2 > 0$ and confidence $0\leq \alpha< 1$,
the necessary number of samples can be upper bounded by polynomials in $\frac{1}{\epsilon_{0}},\frac{1}{\epsilon_{1}},\frac{1}{\epsilon_{2}}$ and $\frac{1}{1-\alpha}$,
as follows:
\begin{subequations}
{\small\begin{align}\label{eq:NMM}
N&=\textstyle\Big{\lceil} 128\big(\ln(4e(d+1))+d \ln(32e) \big)\Big(\frac{1}{\epsilon_2}\Big)^{2p}\\ &\textstyle+128dp\Big(\frac{1}{\epsilon_2}\Big)^{2p} \ln\Big(\frac{1}{\epsilon_2}\Big)+128\Big(\frac{1}{\epsilon_2}\Big)^{2p}\ln\Big(\frac{1}{\delta_2}\Big)
\Big{\rceil},\notag\\
M&=\textstyle\Big{\lceil}\frac{1}{2}\Big(\frac{1}{\epsilon_1}\Big)^2\Big(\ln(2|\M{A}|)+\ln(\frac{1}{\delta_1}) +\ln(N)\Big)\Big{\rceil}, \\
M_0&=\textstyle\Big{\lceil}\frac{1}{2}\Big(\frac{1}{\epsilon_0}\Big)^2\Big(\ln\big(2|\M{A}|\big)+\ln\big(\frac{1}{\delta_0}\big)\Big)\Big{\rceil},
\end{align}}
\end{subequations}
\noindent
with positive parameters $\delta_0, \delta_1, \delta_2 > 0$ such that $1-\alpha=\delta_0+(N_t-1)\delta_1+(N_t-1)\delta_2$ (derivation in Appendix \ref{App:samplecomplexity}).\\[1em]
%
%
Note that the above accuracy does not depend on the dimensionality $n$ of the state space.
Therefore the accuracy for models with higher state space dimension will directly depend on the complexity of the function class employed to approximate given value functions.
This is unlike standard grid-based numerical approximation techniques,
such as that proposed in \citep{SA13}, which are known to break down over models with large state-space dimensionality. \\*[1em]
%
Let us add a few comments on the dependency of the accuracy from several design variables of the FVI algorithm.
Firstly, the choice of function class affects both the inherent Bellman error and the pseudo dimension:
while the former gives a measure of how well the the function class $\M{W}$ can represent the value functions $\W^\ast_k$,
the latter is directly related to the complexity of the function class.
The objective is to obtain a low complexity function class that is capable to accurately fit the given value functions.
A good accuracy can be hard to attain when a bad choice of the function class leads to both a large bias (due to the inherent Bellman error) and to a large number of samples.
Secondly, the sample distribution $\eta$ defines, together with the state transitions, the scaling factors $B$ and $B_0$.
In order to minimize the error propagation caused by $B$, the distribution $\eta$ should be ``aligned'' with the model dynamics characterized by the density of the transition kernel.
Finally, the parameters $N,M,M_0$ follow from the required accuracy demands,
which are reformulated as polynomial functions depending on $\epsilon_0, \epsilon_1, \epsilon_2$ and $\delta_0, \delta_1, \delta_2$, e.g. as in \eqref{eq:NMM}.\\*[1em]
With regards to the single-step errors,
Lemmas \ref{thm:est1} and \ref{thm:NumberSamples} determine a bound uniformly over the whole function class and for any possible probability distribution.
The used distribution-free notions lead to conservative bounds \citep{LocalRademacherComplexities},
which then result in 
a large set of required samples.
Notice however that the construction allows to compute the bounds a-priori, before any sample is drawn from the system.
\\*[1em]
In conclusion, the formal probabilistic bounds on the error made by the approximation algorithm show that the algorithm converges in probability to the best approximation for an increasing cardinality of the samples.
\section{Sample-Based Error Bounds 
}\label{EmpError}
{
In this section, 
a probabilistic bound on the error of the approximated reach-avoid probability is developed according to a model- and sample-based philosophy. 
This bound can be computed after the reach-avoid probability has been obtained via dynamic programming as time-dependent, approximate value functions $\We^\ast_{k}$ for $0\leq k \leq N_t-1$.  
The obtained bounds are not only sample dependent but also distribution dependent, since knowledge of the transition kernel is necessary to compute scaling factors such as \eqref{eq:scaling_B}. 
Throughout the section it is assumed that the used samples are not correlated with $\We^\ast_{k}$, 
in other words if the estimated value functions $\We^\ast_{k}$ are a result of a sampled-based optimization, 
then the samples used for the bounds in this section are drawn anew and independently.  

The probabilistic bound on the accuracy 
\(\p{\right.\left|\right.\rFVI-r_{x_0}^{\ast}(K,A)\left.\right|>\acc\left.}\leq \Pacc\), 
as given in \eqref{eq:acc}, 
and computed now in a sample-based manner,
includes an empirical estimate of the quantity $\left|\right.\rFVI-r_{x_0}^{\ast}(K,A)\left.\right|$.  
%
This sample-based estimate is computed as follows: 
\begin{compactenum}
\item[a.] collect samples 
$\cramped{(x_i)_{1\leq i \leq \Nh}}$ according to (1) in Algorithm\ref{alg:samples} (with $N=\Nh$), 
and subsequently use (2) with $M=\Mh$ to draw both $\cramped{(\xsh_1)_{1\leq j\leq \Mh}}$ and $\cramped{(\xsh_2)_{1\leq j\leq \Mh}}$ 
;   
\item[b.] estimate the single step error \eqref{eq:emp_reg} 
 for each $k$; 
\item[c.] estimate the bias \eqref{eq2:bias} 
 for each $k$;  
 \item[d.] compute the multi-step error as a propagation and a composition of the estimates in (b.) and (c.). 
\end{compactenum}
%
%
The single step error is estimated as
 \begin{align}\label{eq:emp_reg}
\textstyle\big\|\We^\ast_{k} -\Te  \We^\ast_{k+1}\big\|_{\mathclap{\ \ \ 1,\tilde{\eta}}}
\ \ =\frac{1}{\Nh}
\sum\limits_{i=1}^{\Nh}
\big|\We^\ast_{k}(x^i)-\max\limits_{a\in\A}\Te^{a}_1 \We^\ast_{k+1}(\xh)\big| 
\end{align}
for all $ 1\leq k\leq N_t-1$. 
The term on the right is the empirical $1$-norm and can be written as a $1$-norm with weighting $\tilde{\eta}$, which is the empirical distribution of $\eta$ resulting from $(x_i)_{1\leq i \leq \Nh}$. 
Let the operator $\Te^{a}_\alpha $ be, 
for $\alpha=1,2$, 
\begin{align}\label{eq:empT1} 
&\textstyle\Te^{a}_{\!\alpha} \We^\ast_{\!\!k+1}(\xh)\!=\!\frac{1}{\tilde{M}}\!\!\sum\limits_{j=1}^{\tilde{M}}\!\ind_K(\xsh_\alpha)\!
+\!\ind_{A\setminus K}(\xsh_\alpha)\We^\ast_{\!k+1}(\xsh_\alpha).
\end{align} 
The estimate in (\ref{eq:emp_reg}) is biased due to the maximization over the action space, therefore as a second step we estimate a bound on this bias.
The  combination of the two sample sets   $ (\xsh_1 )\cramped{_{1\leq j\leq \Mh}}$ and $ (\xsh_2 )\cramped{_{1\leq j\leq \Mh}}$, 
for each $\xh$ and $a$, 
allows us to estimate this bias for $1\leq k\leq N_t-1$ as 
\begin{align}\label{eq2:bias}
\Big\|\max_{a\in\M{A}} \big|\Te^{a}_1\We^\ast_{k+1}- \Te^a_2\We^\ast_{k+1}\big|\Big\|_{1,\tilde{\eta}}.
\end{align}
In the following theorem, 
an expression for the bound on \(\p{\right.\left|\right.\rFVI-r_{x_0}^{\ast}(K,A)\left.\right|>\acc\left.}\leq \Pacc\) is derived, 
employing the error propagation technique first used in Section \ref{ErrorFVI},  
the estimates of the single step error above, 
and the bias 
\eqref{eq2:bias} in combination with Hoeffding's inequality \citep{Hoeffding}.  
\begin{theorem}\label{thm:DataDependent}
Consider a reach-avoid problem defined on a Markov process with a continuous state space $\M{X}$ and a finite action space $\M{A}$.
The optimal reach-avoid probability $r^{\ast}_{x_{0}}(K,A)$ for a given target set $K$, safe set $A$, initial state $x_0$, and time horizon $N_t$,
is approximated by the quantity $\rFVI$ obtained with the FVI Algorithm in Algorithm \ref{alg:FVI},
which has an accuracy of $\acc$ and a confidence $\Pacc$,
as stated in (\ref{eq:acc}), if the following holds:
\begin{subequations}\begin{align}
&\acc=B_0\textstyle\sum\limits_{k=1}^{N_t-1}B^{k-1}\bigg(
\left\|\We^\ast_{k} -\Te  \We^\ast_{k+1}\right\|_{1,\tilde{\eta}} \!\! \label{eq:emp_acc}
+ \left\|\max_{a\in\M{A}} \left|\Te^{a}_1\We^\ast_{k+1}- \Te^a_2\We^\ast_{k+1}\right|\right\|_{1,\tilde{\eta}}
\bigg)
+B_0\epsilon+\epsilon_0, \\
&\Pacc=e^{-2\frac{\Nh \epsilon^2}{L^2}}-\delta_0, \label{eq:emp_acc_conf} 
\end{align}\end{subequations}
with $L=2\sum_{k=1}^{N_t-1}B^{k-1}$, 
and sample sizes $\Mh$ and $\Nh$ according to the sample sets drawn according to the distribution $\eta$. 
Equation \eqref{eq:emp_acc} includes the estimated error as a combination of (\ref{eq:emp_reg}) and (\ref{eq2:bias}). The scaling factors $B$ and $B_0$ are  computed as in (\ref{eq:scaling_B}) and (\ref{eq:scaling_state}) for the same sampling distribution $\eta$.
The factors $\delta_0$ and $\epsilon_0$ are computed as in Lemma \ref{thm:est1} with $M= M_0$ and $N=1$.   
\end{theorem}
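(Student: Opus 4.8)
The plan is to re-use the deterministic error-propagation estimate underlying Theorem~\ref{thm:MultiStep}, then to replace each true single-step error $\|\We^\ast_{k}-\T\We^\ast_{k+1}\|_{1,\eta}$ by a sample-based surrogate built from \eqref{eq:emp_reg}--\eqref{eq2:bias}, and finally to absorb the only remaining randomness — the fresh samples of step~(a), which by hypothesis are independent of the fixed functions $\We^\ast_{k}$ — into a single one-sided Hoeffding bound. First I would record the deterministic decomposition behind Theorem~\ref{thm:MultiStep} taken at $p=1$, i.e.\ Lemma~\ref{lem:errorpropagation} iterated over $k$ (the terminal term vanishing since $\We^\ast_{N_t}=\W^\ast_{N_t}$) together with the change-of-measure and triangle-inequality split at $k=0$:
\begin{align*}
\big|\rFVI-r_{x_0}^{\ast}(K,A)\big|\;&\le\;B_0\sum_{k=1}^{N_t-1}B^{k-1}\big\|\We^\ast_{k}-\T\We^\ast_{k+1}\big\|_{1,\eta}\\
&\quad+\big|\We^\ast_0(x_0)-\T\We^\ast_1(x_0)\big|,
\end{align*}
with $B,B_0$ as in \eqref{eq:scaling_B}--\eqref{eq:scaling_state}. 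It then remains to replace the two pieces on the right by quantities computable from the fresh sample set.

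Next I would show that each true single-step error is, in expectation over the fresh samples, at most the $k$-th bracketed summand $\hat e_k$ of \eqref{eq:emp_acc}. Writing $\omega_1,\omega_2$ for the two independent blocks of $\Mh$ transition samples attached to a base point, the pointwise inequality to establish for every $x\in A\setminus K$ is
\begin{align*}
\big|\We^\ast_{k}(x)-\T\We^\ast_{k+1}(x)\big|\;&\le\;\Eo{\omega_1}{\big|\We^\ast_{k}(x)-\max_{a}\Te^{a}_1\We^\ast_{k+1}(x)\big|}\\
&\quad+\Eo{\omega_1,\omega_2}{\max_{a}\big|\Te^{a}_1\We^\ast_{k+1}(x)-\Te^{a}_2\We^\ast_{k+1}(x)\big|}.
\end{align*}
This follows from a triangle inequality isolating $\T\We^\ast_{k+1}(x)=\max_{a}\E_x^{a}[\,\cdot\,]$ from $\max_{a}\Te^{a}_1\We^\ast_{k+1}(x)$, the elementary bound $|\max_a u_a-\max_a v_a|\le\max_a|u_a-v_a|$, the identity $\E_x^{a}[\,\cdot\,]=\Eo{\omega_2}{\Te^{a}_2\We^\ast_{k+1}(x)}$ (so that $\big|\Te^{a}_1\We^\ast_{k+1}(x)-\E_x^{a}[\,\cdot\,]\big|\le\Eo{\omega_2}{\big|\Te^{a}_1\We^\ast_{k+1}(x)-\Te^{a}_2\We^\ast_{k+1}(x)\big|}$ by Jensen), and pushing the maximum over actions inside the expectation via $\E[\max_{a}X_a]\ge\max_{a}\E[X_a]$. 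Integrating against $\eta$ and taking expectations over the base points — so that $\E[\|\cdot\|_{1,\tilde\eta}]=\|\cdot\|_{1,\eta}$ — yields $\|\We^\ast_{k}-\T\We^\ast_{k+1}\|_{1,\eta}\le\E[\hat e_k]$, hence $\sum_k B^{k-1}\|\We^\ast_k-\T\We^\ast_{k+1}\|_{1,\eta}\le\E[S]$, where $S:=\sum_{k=1}^{N_t-1}B^{k-1}\hat e_k$ is precisely the sum multiplying $B_0$ in \eqref{eq:emp_acc}.

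Then I would concentrate $S$. If the $\Nh$ base points and their transition samples (which do not depend on $k$ — only the fixed $\We^\ast_{k+1}$ evaluated on them does) are drawn once and reused across all $k$, then $S=\tfrac1{\Nh}\sum_{i=1}^{\Nh}Z_i$ is an average of i.i.d.\ terms, each lying in $[0,L]$ with $L=2\sum_{k=1}^{N_t-1}B^{k-1}$, because $\We^\ast_{k}$ and every Monte-Carlo average $\Te^{a}_\alpha\We^\ast_{k+1}$ take values in $[0,1]$. A one-sided Hoeffding inequality then gives $\p{\E[S]-S>\epsilon}\le e^{-2\Nh\epsilon^2/L^2}$. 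For the scalar term, Lemma~\ref{thm:est1} with $N=1$ and $M=M_0$ — the instance from which $\epsilon_0,\delta_0$ are taken — gives $\big|\We^\ast_0(x_0)-\T\We^\ast_1(x_0)\big|\le\epsilon_0$ outside an event of probability $\delta_0$. On the intersection of the two good events, the decomposition of the first paragraph, the bound $\sum_k B^{k-1}\|\We^\ast_k-\T\We^\ast_{k+1}\|_{1,\eta}\le\E[S]\le S+\epsilon$, and $|\We^\ast_0(x_0)-\T\We^\ast_1(x_0)|\le\epsilon_0$ give $\big|\rFVI-r_{x_0}^{\ast}(K,A)\big|\le B_0(S+\epsilon)+\epsilon_0=\acc$; a union bound over the two failure events then produces the confidence $\Pacc$ of \eqref{eq:emp_acc_conf}.

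The main obstacle is the second step: correctly charging the \emph{bias} introduced by the maximisation over actions inside $\Te$. One has to apply the two-sample decoupling $\E_x^{a}[\,\cdot\,]=\Eo{\omega_2}{\Te^{a}_2\We^\ast_{k+1}(x)}$ and the interchange $\E[\max_a X_a]\ge\max_a\E[X_a]$ each in exactly the direction that makes \eqref{eq2:bias} an \emph{upper} bound on the bias in expectation, and the blocks $\omega_1,\omega_2$ (and the base points) must be genuinely independent of one another and of the fixed $\We^\ast_{k}$ for the decoupling and for the i.i.d.\ Hoeffding step to be valid. The remainder — checking that the per-base-point quantity $Z_i$ has range exactly $L$ so that the Hoeffding exponent reads $2\Nh\epsilon^2/L^2$, and that $\E[\|\cdot\|_{1,\tilde\eta}]=\|\cdot\|_{1,\eta}$ — is routine bookkeeping.
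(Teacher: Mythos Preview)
Your proposal is correct and follows essentially the same route as the paper's proof: the deterministic propagation of Theorem~\ref{thm:MultiStep} at $p=1$, the triangle/Jensen/decoupling argument to dominate $\|\We^\ast_{k}-\T\We^\ast_{k+1}\|_{1,\eta}$ by the expectation of \eqref{eq:emp_reg} plus \eqref{eq2:bias}, a one-sided Hoeffding bound on the $\Nh$-sample average with range $L=2\sum_{k}B^{k-1}$, Lemma~\ref{thm:est1} with $N=1$, $M=M_0$ for the $x_0$ term, and a final union bound. The only cosmetic difference is that the paper splits around the auxiliary quantity $\Eo{\vec y_1}{\max_a\Te^a_1\We^\ast_{k+1}(x)}$ before pushing expectations through, whereas you do the pointwise split first and integrate afterwards; the resulting inequalities are identical.
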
 
The accuracy $\acc$ depends on two terms, 
the propagation of the estimated single-step and bias errors over the time horizon up to $k=1$, 
and the estimation errors $B_0\epsilon+\epsilon_0$ for $k=0$, related to the confidence $\Pacc$. 

Suppose that a close-to-optimal policy is given, for example a policy as detailed in Remark \ref{rem:pol} and computed from the series of estimated value functions $\We_k^\ast$. 
Then we know that $\cramped{r_{x_0}^{\ast}}(K,A)\geq {r_{x_0}^{\hat{\mu}^\ast}}(K,A)$, therefore a lower bound on the value of $r_{x_0}^{\hat{\mu}^\ast}(K,A)$ is also a lower bound on $r_{x_0}^{\ast}(K,A)$.   
Note that for a policy $ \mu$, the closed-loop Markov process is time dependent. 
This allows us to estimate $\cramped{r_{x_0}^\mu}(K,A)$ 
directly from traces of this autonomous Markov process. 
The deviation of this empirical mean 
can be bounded probabilistically using Hoeffding's inequality. 
Additionally an upper bound on the deviation $|\rFVI-r_{x_0}^\mu(K,A)|$ can be computed. 
The combination of the bound in Theorem \ref{thm:DataDependent} and of the bound on $|\rFVI-r_{x_0}^\mu(K,A)|$  provide a bound on the performance deviation of $r_{x_0}^\mu(K,A)$:  
the triangle inequality leads to $|r_{x_0}^\ast (K,A)-r_{x_0}^\mu(K,A)|\leq |r_{x_0}^\ast (K,A)-\rFVI|+|\rFVI-r_{x_0}^\mu(K,A)|$. 

In comparison to the a-priori 
bound derived in Section \ref{SectionFour}, 
the sample-based bounds do not depend on the inherent Bellman error and can be shown to be less conservative in general.  
Moreover, they provide insight into the accuracy of the iterations steps. 
However, they give no information about the expected convergence of the algorithm, 
and they can only be computed after a run of the algorithm. 
Similarly to the a-priori  bounds, they do not depend on the dimensionality of the state space and are expected to scale better than those used for grid-based approaches such as \citep{SA13}. 

\section{Case Study and Numerical Experiments}
\label{SectionFive}
We consider a case study from the literature \citep{Benchmark}, 
where the goal is to maximize the probability that the temperature of two interconnected rooms, while staying within a comfortable range,
reaches a smaller target 
range within a given finite time horizon.
The temperature can be affected using local heaters. 
A reach-avoid problem is 
set up by selecting as the safe set $A=[17.5\ 22]^2$, as the target set $K=[19.25\ 20.25]^2$,
and a fixed time horizon $N_t=10$.
 The case study was implemented in Matlab R2013b on a notebook with 2.6 GHz Intel Core i5 and 16 GB of RAM.
\subsection{Model}
The dynamics of the temperature in 
 the two rooms is described by a Markov model,
with the temperature of the rooms  
making up the state space $\X=\mathbb{R}^2$,
and where the possible configurations $\{OFF, ON\} = \{0,1\}$ of the two heaters form the finite action space $\A$. 
Hence $\A=\{0,1\}\times\{0,1\}$, and as an example the action related to the first heater in the ON mode and the second in the OFF one is given as $a=[1\ 0]^T\in \A$.
The dynamics at discrete time $k$ 
is characterized by the following stochastic difference equation:
\begin{align}\label{eq:Benchmark2}
&\xknext=\Ca x_k+\Cb a+\Cc+n_k,
\mbox{ where }\\
&\Ca\!=\!\begin{bsmallmatrix}1-b_1-a_{1,2}&  a_{1,2}\\  a_{2,1}& 1-b_2-a_{2,1} \end{bsmallmatrix},\ \! \Cc  =\begin{bsmallmatrix}b_1 x_a \\ b_2 x_a \end{bsmallmatrix},
\textmd{ and }\Cb=\begin{bsmallmatrix}c_1&0\\0&c_2\end{bsmallmatrix},\notag\end{align}
and with the following parameters: 
$x_a$ is the ambient temperature (assumed to be constant),
$b_i\geq0$ is a constant for the average heat loss rate of room $i$ to the environment; 
$a_{ij}\geq0$ is a constant for the average heat exchange rate of room $i$ to room $j \not= i$; 
$c_i\geq0$ is a constant for the rate of heat supplied by the heater in room $i$. 
The parameters re instantiated as $b_1=0.0375$, $c_1=0.65$, $x_a=6$, $b_2=0.025$, $c_2=0.6$, and $a_{ij}=0.0625$. 
The noise process $n_k$ is a realization of zero-mean Gaussian 
 random variables with covariance $\nu^2I_{2\times 2}$ (2-dimensional identity matrix $I_{2\times 2}$) and $\nu=0.5$.
%
Let $\M{N}(\cdot \mid \mu,\Sigma)$ be a $2$-dimensional multivariate normal distribution over $(\X,\M{B}(\X))$ with mean $\mu$ and covariance matrix $\Sigma$,
then the stochastic kernel $\Txo$ is given as
\begin{equation}\label{eq:casestudy_kernel}
\Tx{\cdot}{x,a}=\M{N}\left(\cdot\mid\Ca x+\Cb a+\Cc,\nu^2I_{2\times 2}\right)
\end{equation}
and characterises the probability distribution of the stochastic transitions in \eqref{eq:Benchmark2}.
The stochastic kernel \eqref{eq:casestudy_kernel} admits the probability density
\begin{align} \label{eq:densitytrans}
\tx{y}{x,a}=\frac{1}{\sqrt{|\Sigma|(2\pi)^2}}e^{\left(-\frac{1}{2} \left(y-\bar{\mu}\right)^T \Sigma^{-1} \left(y-\bar{\mu}\right) \right)},
\end{align}
where $|\cdot|$ denotes the determinant of a matrix, and as before
 the covariance matrix equals $\Sigma=\nu^2I_{2\times 2}$ and the mean value is $\bar \mu=\Ca x+\Cb a+\Cc$.

\subsection{Application of the Fitted Value Iteration Algorithm}
    The FVI scheme is implemented as in Algorithm \ref{alg:FVI}, and approximates the solution of the reach-avoid problem. 
    We obtain an approximation of $r^\ast_{x_0}(K,A) = \T^{10} \We^\ast_{10}(x_0)$ by $\Te\We^\ast_1(x_0)$, 
    while using the auxiliary functions $\We^\ast_{9},\allowbreak\ldots, \allowbreak\We^\ast_1$ to approximate $\T \We^\ast_{10},\allowbreak \ldots,\allowbreak \T \We^\ast_2$ in the FVI scheme -- 
    equivalently, function $\We^\ast_k$ approximates $\T^{N_t-k} \We^\ast_{N_t}$. 
    For a given temperature $x_k$ at time instant $k$, the function
    $\We^\ast_k(x_k)$ gives the approximate probability that the consecutive temperature values $x_{k+1},\ldots,x_{N_t}$ will reach the temperature range $[19.25,20.25]^2$ within $N_t-k$ time steps, while staying inside the safe set $[17.5,22]^2$. 

In order to apply the FVI algorithm, we select 
a uniform distribution $\eta$ over $A\setminus K$ 
to sample from,  
then select a function class $\M{W}$ and a value for $p\geq1$ to solve \eqref{eq:FVIstep2}. 
We consider a function class $\M{W}$ composed of Gaussian radial basis function (RBF) neural networks with 50 RBFs with a uniform width of 0.7.  
    The neural network toolbox of Matlab is used to solve the regression problem in \eqref{eq:FVIstep2} as a least-square problem (with $p=2$).
    A neural network with a single layer of hidden units of Gaussian type radial basis functions is proved to be a universal approximator for real-valued functions \citep{hartman1990layered}.
    Furthermore the pseudo dimension of an artificial neural network with $W$ free parameters and $k$ hidden nodes has been upper bounded by $O(W^2k^2)$ \citep{Karpinski1997169,NeuralNetworkLearning}. This means that for any desired precision the required number of samples is bounded by a polynomial in the number of hidden nodes.

The following quantities are obtained for the sample complexities: $N=600$, $M=10^3$, $M_0=10^3$.
The approximate value functions  for  $\We^\ast_{9}, \We^\ast_5 $, and $\We^\ast_1$ are displayed in Fig. \ref{fig:Value}.
On the top plots, a point on the state space is associated with a probability for the reach-avoid property over the given time horizon.
At the bottom, the contour plots (level sets) characterize the set of points that verify the reach-avoid property with a probability at least equal to the given level.
 
\begin{figure}[h!]\centering
 \begin{minipage}{.8\linewidth}\includegraphics[width=\textwidth]{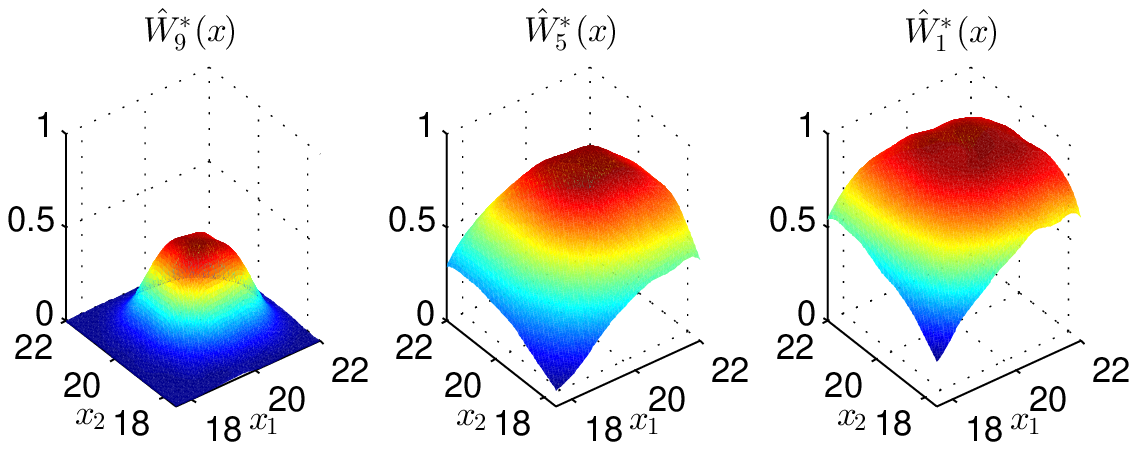}\hfill\\*
 \noindent \includegraphics[width=\textwidth]{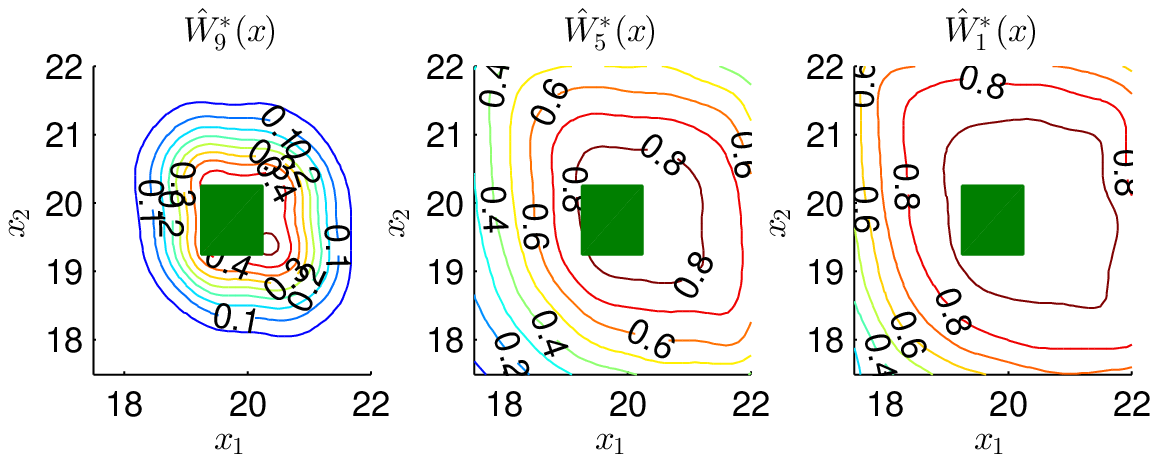} \hfill
\end{minipage}

   \caption{
  Function approximations  $\We^\ast_{9}, \We^\ast_5 $, and $\We^\ast_1$ -- level sets (top) and contour plots (bottom).
  The function approximations of the value functions are obtained with the FVI algorithm as in Algorithm \ref{alg:FVI}, using a radial basis function (RBF) neural network with $50$ radial basis functions and a given width of $0.7$,  a uniform sampling distribution $\eta$ over $A\setminus K$ and  $M=10^3$ and $N=600$. The safe set $A$ is $[17.5\ 22]^2$, and reach set $K=[19.25\ 20.25]^2$. The    $\We^\ast_{9}, \We^\ast_5 $, and $\We^\ast_1$ approximate  $\T\W^\ast_{10}, \T^5\W^\ast_{10}$, and $\T^9\W^\ast_{10}$ over the set $A\setminus K$. In the contour plots (bottom) the green squares denote set $K$.
 }\label{fig:Value}
 \end{figure} 
\begin{table}[t]
\caption{Approximate solutions $\rFVI$ are given in the table for several initial conditions $x_0$, together with the related sub-optimal action at the initial time.
}\label{tab:init}
   {\rule{.7\linewidth}{1pt}}
   \centering
\begin{tabular}{p{1.5cm} |c c c c }\\[-1em]
 $x_0$ &  $[19\,\,19]^T$   &  $[20.5\,\,19]^T$ & $[19\,\,20.5]^T$ & $[20.5\,\,20.5]^T$ \\[.1em]
$\rFVI$&            0.8808  &  0.9454 &   0.9206&    0.9557 \\[.1em]
$a$&  {\small(ON,ON)} &  {\small(OFF,ON)} & 
	  {\small(ON,OFF)}&  {\small(OFF,OFF)}
\\[.1em] \hline \\[-1em]
$x_0$ &  $[18\,\,18]^T$   &  $[21.5 \,\,18]^T$ & $[18 \,\,21.5]^T$ & $[21.5\,\, 21.5]^T$\\[.1em]
$\rFVI$&     0.5204 &   0.7635 &   0.8596  &  0.8312\\[.1em]
$a$&(ON,ON) &   (ON,OFF) &   (OFF,ON)   & (OFF,OFF)\\[.1em]
\end{tabular}

   {\rule{.7\linewidth}{1pt}}
\end{table}
Fig. \ref{fig:add1} displays a suboptimal policy $\hat{\mu}^\ast$ that is obtained via the FVI algorithm as discussed in Remark \ref{rem:pol},  
by employing the tree classification method \texttt{ClassificationTree.fit} of \texttt{Matlab}.  
\begin{figure}[h!]\centering
 \includegraphics[width=1\columnwidth]{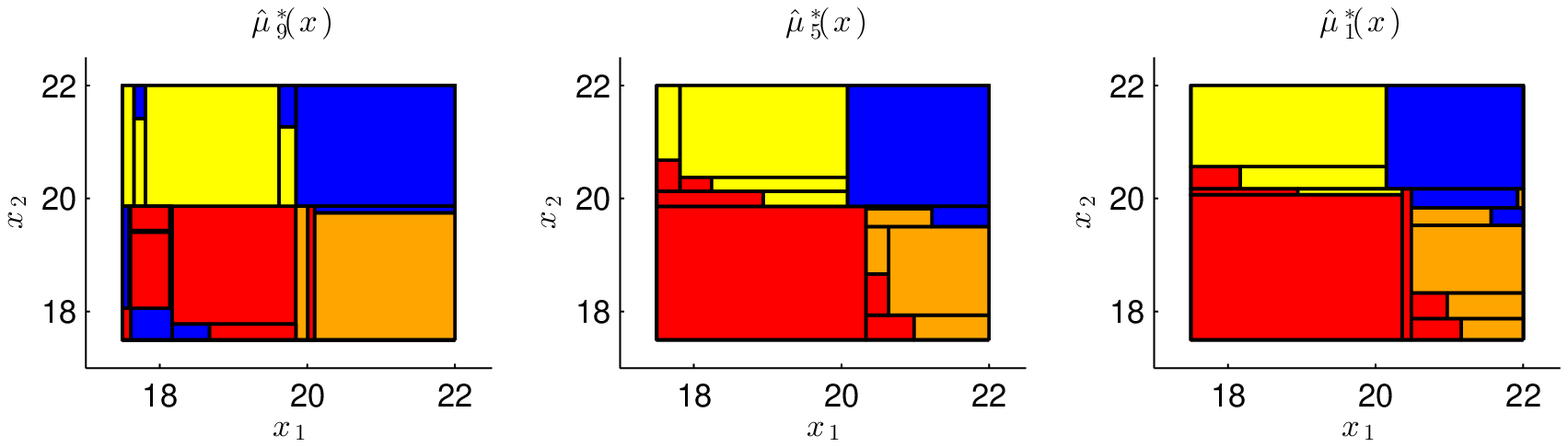}
  \caption{The policy $\hat\mu^\ast$ for $k=9,5,1$ obtained form the same computations as in Fig. \ref{fig:Value}. The action (ON,ON) is labeled in red, (OFF, ON) is orange, (ON,OFF) is yellow, and (OFF,OFF) is blue. }\label{fig:add1}
\end{figure}
%
Observe that policy $\hat{\mu}^\ast_9$ for $k=9$ is not accurate over the flat regions of $\We^\ast_{9}$ (corresponding to the blue spots in the left side of Fig. \ref{fig:add1} - left plot), 
which are far away from the reach set $K$. Since the average heat loss rate of room 1 is the highest, we expect that the heating should be turned on relatively longer. Fig. \ref{fig:add1} confirms this, i.e. the red (ON,ON) region is not square-shaped as the heaters stay ON for higher temperatures in room 1 than in room 2. 
\subsection{Performance of the Fitted Value Iteration}
We are interested in the performance of the FVI algorithm and in analyzing how the computed accuracy deteriorates over the iterations from $N_t-1$ to $1$. 
Note that the last iteration is of little interest, since it does not include the fitting step. 
The accuracy is computed using the model-based and sample-based bounds of Section \ref{EmpError}. 
Fig. \ref{fig:Sample-base_error} plots the sample-based estimates of the single step error \eqref{eq:emp_reg}, namely $\|\We^\ast_{k} -\Te  \We^\ast_{k+1}\|_{1,\tilde{\eta}}$ 
and of the bias \eqref{eq2:bias}, namely $\big\|\max_{a\in\M{A}} \big|\Te^{a}_1\We^\ast_{k+1}- \Te^a_2\We^\ast_{k+1}\big|\big\|_{1,\tilde{\eta}}$. 
Observe that the values of both   \eqref{eq:emp_reg} and \eqref{eq2:bias} fall in the interval between $3\times 10^{-3}$ and $5\times 10^{-3}$. The bias estimate \eqref{eq2:bias} appears distributed all over this interval, whereas there is a noticeable trend in the plot of \eqref{eq:emp_reg}, which suggests that the first iterations can be fitted more easily. 
%
\begin{figure}[t]
\centering
\resizebox{.8\linewidth}{!}{\begin{tikzpicture}
\begin{axis}[%
width=.8\linewidth,
height=0.2\linewidth,
scale only axis,axis y line*=left,axis x line*=left,
xmin=0,
xmax=10,
ymin=0.003,title=Estimate,
ymax=0.005,xlabel= $ k $, 
every axis x label/.style={at={(current axis.south)},below=1em},
]
\addplot [color=blue,only marks,mark=x,mark options={solid}]
  table[row sep=crcr] {
9	0.00419633851637198\\
8	0.00453581107111827\\	
7	0.00391773087859635\\	
6	0.0040000600821067\\
5	0.00383088056170553\\	
4	0.00473747586639425\\	
3	0.00464995175278273\\	
2	0.00394770982488436\\	
1	0.0049419311038968\\
		};
\addplot [color=black!50!green,only marks,mark=o,mark options={solid}]
  table[row sep=crcr] {
 9	0.0033665\\	
8	0.00352858299817802\\
7	0.00318633634488082\\	
6	0.00352418706707855\\	
5	0.00395243769944317\\	
4	0.00397050162434925\\	
3	0.00418734555539593\\	
2	0.00433522905401612\\	
1	0.00456060533155798\\
		};
\end{axis}
\end{tikzpicture}}
\caption{Sample-based estimation of the single step error  $\|\We^\ast_{k}\allowbreak-\Te  \We^\ast_{k+1}\|_{1,\tilde{\eta}}$ (\textcolor{blue}{ $\times$}) and  $\big\|\max_{a\in\M{A}}\allowbreak \big|\Te^{a}_1\We^\ast_{k+1}\allowbreak- \Te^a_2\We^\ast_{k+1}\big|\big\|_{1,\tilde{\eta}}$ (\textcolor{black!50!green}{$\circ$}) at each iteration for $k=9$ until $k=1$. An independent set of samples of size $\Nh=4\cdot 10^3$ and $\Mh=10^4$ has been used.
 }\label{fig:Sample-base_error}
\end{figure}
In Fig. \ref{fig:TotalErr}, the accuracy of the FVI algorithm propagated over the iterations is given, 
starting from the first iteration $\|\We^\ast_{9}-\T W^\ast_{N_t}\|_{\eta}$ until the last iteration $\|\We^\ast_{1}-\T^{N_t} W^\ast_{N_t}\|_{\eta}$.
This accuracy is computed using Theorem \ref{thm:DataDependent}. The estimates  in Fig. \ref{fig:Sample-base_error} are used to compute the estimate of the accuracy $\|\We^\ast_{k}-\T W^\ast_{N_t}\|_{\eta}$ and the accuracy $\acc$ for a given $\Pacc$.
\begin{figure}[h!]\centering
\resizebox{.8\linewidth}{!}{\begin{tikzpicture}
\begin{axis}[%
width=.8\linewidth,
height=0.2\linewidth,
scale only axis,
xmin=0,title=Accuracy $\acc$,axis y line*=left,axis x line*=left,
xmax=10,
ymode=log,
ymin=0.001,
ymax=1000,
yminorticks=true,
xlabel= $ k $, 
every axis x label/.style={at={(current axis.south)},below=1em},
]
\addplot [color=black!50!green,only marks,mark=o,mark options={solid},forget plot]
  table[row sep=crcr]{
9	0.00747924594411868\\	
8	0.0299790810872909\\
7	0.0990615646383074\\	
6	0.311323814921403\\
5	0.963978693776945\\
4	2.9701952625502\\
3	9.13144621198238\\	
2	28.0560609108734\\
1	86.184288933101\\
};
\addplot [color=blue,only marks,mark=x,mark options={solid},forget plot]
  table[row sep=crcr]{
 9	0.0414099480661942\\
8	0.168129962707759	\\
7	0.55733095649763\\
6	1.75285634344731\\
5	5.42566271309302\\
4	16.708455090703\\
3	51.3632630684878\\	
2	157.807541149486\\
1	484.757646022523\\
};
\end{axis}
\end{tikzpicture}}
\caption{The plot shows the accuracy based on the error propagation over the iterations estimated with an hold out set of size $\Nh=4\cdot 10^3$ and $\Mh=10^4$. The accuracy $\acc$, marked as (\textcolor{blue}{\scriptsize $\times$}), is given for $1-\Pacc=0.9$. In the graph, the estimate of the accuracy, $\| \We^\ast_k - \T^{N_t-k}\W^\ast_{N_t}\|$, is given as (\textcolor{black!50!green}{$\circ$}). }\label{fig:TotalErr}\end{figure}
For each iteration step, it can be observed in Fig. \ref{fig:Sample-base_error} that the error caused by estimating the dynamic programming operator $\T$ and by fitting a function is relatively small ($< 10^{-2}$).  
However,  
the error grows exponentially over the whole horizon: 
as expected, the accuracy of the algorithm depends strongly on $B$,  
which has been computed numerically and amounts to $3.07$. 


\section{Conclusions and Future Work}
\label{Conclusions}
 
This article has investigated the performance of a sample-based approximation scheme for the synthesis of optimal controllers maximizing the probability 
of the known ``reach-avoid'' specification. 
The approximate computational scheme is based on the Fitted Value Iteration algorithm,
which hinges on random sample extractions. \\*
We are interested in the non-trivial extension to {\it continuous control} spaces, 
as well as in the assessment of the performance of synthesized approximate policies over the concrete model. 
Finally, the development of better sampling distributions that minimize the error propagation can lead to tighter errors,  
which can be more relevant in practice.  
To this end, the optimal sampling distribution should be used to optimize the scaling factors by resembling more closely the local stochastic kernels.


\bibliographystyle{plain}  
\bibliography{BIB}
\clearpage

\appendix
\onecolumn
\section{Proof of Lemma \ref{thm:est1}: Bound on Estimation Error}\label{proof:est1}

We employ results on the concentration of random variables \citep{MAnthony},
which in general raise conditions on a random variable ensuring its realizations to be concentrated around its expectation,
in the sense that the probability of a deviation from the expectation is exponentially small (as a function of the deviation).
Of interest to this work is a known bound holding for sums of bounded and independent random variables \cite{Hoeffding}.

\begin{proposition}[Hoeffding's inequality, \cite{Hoeffding}]\label{thm:Hoeffding}
Suppose that $X_i$, for $i=1,2,\ldots,N,$ are independent random variables
supported on $[0,1]$. Then
$$
\p{\left|\textstyle\sum_{i=1}^{N}\xb-\E{\sum_{i=1}^{N}X_i}\right|\geq N \epsilon}\leq 2 e^{-2N\epsilon^2},
$$
where $\E{\sum_{i=1}^{N}X_i}$ is the mean of the random variable $\sum_{i}X_i$,
whereas the empirical mean is defined as $\sum_{i=1}^{N}\xb$,
where $\xb$ is a realization of $X_i$.
\end{proposition}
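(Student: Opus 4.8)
The statement is the classical two-sided Hoeffding inequality, so the plan is to follow the standard Chernoff-bounding route. First I would center the variables, setting $Y_i = X_i - \E[X_i]$, so that each $Y_i$ has zero mean and takes values in an interval of length one (since $X_i \in [0,1]$). Writing $S = \sum_{i=1}^N Y_i$, the claim reduces to bounding $\p{S \geq N\epsilon}$ and $\p{S \leq -N\epsilon}$ separately and combining them.

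For the upper tail I would apply the exponential Markov (Chernoff) inequality: for any $s>0$,
\[
\p{S \geq N\epsilon} = \p{e^{sS} \geq e^{sN\epsilon}} \leq e^{-sN\epsilon}\,\E[e^{sS}].
\]
Since the $X_i$ are independent, so are the $Y_i$, and the moment generating function factorizes as $\E[e^{sS}] = \prod_{i=1}^N \E[e^{sY_i}]$. The crux of the argument, and the step I expect to be the main obstacle, is Hoeffding's lemma: for a zero-mean random variable $Y$ supported on $[a,b]$ one has $\E[e^{sY}] \leq e^{s^2(b-a)^2/8}$. I would prove this by exploiting convexity of $t \mapsto e^{st}$ to bound $e^{sY}$ above by the chord through the endpoints $a,b$, taking expectations to obtain a bound depending only on $e^{sa}$ and $e^{sb}$, and then studying the logarithm $\psi(s)$ of the resulting expression. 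A second-order Taylor expansion of $\psi$, together with the fact that $\psi''$ equals the variance of a tilted two-point distribution supported in $[a,b]$ and is therefore at most $(b-a)^2/4$, yields $\psi(s) \leq s^2(b-a)^2/8$. In our setting $b-a = 1$, so each factor is bounded by $e^{s^2/8}$ and hence $\E[e^{sS}] \leq e^{Ns^2/8}$.

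Combining the two displays gives $\p{S \geq N\epsilon} \leq e^{-sN\epsilon + Ns^2/8}$ for every $s>0$. I would then minimize the exponent over $s$; it is minimized at $s = 4\epsilon$, producing exponent $-2N\epsilon^2$, so that $\p{S \geq N\epsilon} \leq e^{-2N\epsilon^2}$. Applying the identical argument to the variables $-Y_i$, which are again zero-mean and supported on an interval of length one, bounds the lower tail by $\p{S \leq -N\epsilon} \leq e^{-2N\epsilon^2}$. A union bound over the two one-sided events then yields $\p{|S| \geq N\epsilon} \leq 2 e^{-2N\epsilon^2}$, and since $S = \sum_i X_i - \E[\sum_i X_i]$ this is precisely the asserted inequality.
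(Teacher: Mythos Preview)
Your argument is correct and is the standard textbook proof of Hoeffding's inequality: center the variables, apply the exponential Markov inequality, use independence to factorize the moment generating function, invoke Hoeffding's lemma to bound each factor by $e^{s^2/8}$, optimize over $s$, and finish with a union bound over the two tails. All steps are sound, including the sketch of Hoeffding's lemma via convexity of the exponential and the second-derivative bound $\psi''(s)\leq (b-a)^2/4$.

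The paper, however, does not prove this proposition at all. It is stated with attribution to Hoeffding's original article and then immediately used as a black-box tool in the proof of Lemma~\ref{thm:est1}. So there is no ``paper's own proof'' to compare against; you have simply supplied the classical derivation that the authors chose to cite rather than reproduce.
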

Using Proposition \ref{thm:Hoeffding} the proof of Lemma \ref{thm:est1} is provided as follows.\\*
\begin{proof}
Let us express
a probabilistic error bound on the accuracy of the estimate $\Te \We^\ast_{k+1}$ at each base point $\xb$ and given any $a \in \M{A}$ as
\begin{align*}
&\p{\|\T \We^\ast_{k+1} -\Te \We^\ast_{k+1} \|_{p,\hat{\eta}}> \epsilon_1} \leq \delta_1,
\shortintertext{
where we have used the empirical norm based on $\hat{\eta}$. We obtain}
 &\p{\|\T \We^\ast_{k+1}-\Te \We^\ast_{k+1} \|_{p,\hat{\eta}}\leq  \epsilon_1}
 =
 \p{\|\T \We^\ast_{k+1}-\Te \We^\ast_{k+1} \|_{p,\hat{\eta}}^p\leq \epsilon_1^p} 
 \shortintertext{via definition of the empirical norm in \eqref{eq:empiricalnorm}}
 & =\mathbf{P}\bigg\{ \frac{1}{N}\textstyle\sum\limits_{i=1}^{N}\big{|}\T \We^\ast_{k+1} (\xb)-\Te \We^\ast_{k+1}(\xb)\big{|}^p \leq  \epsilon_1^p\bigg\}
\shortintertext{Note the mutual independence between  the sample sets at different base points $\xb$ given as $\bigcup_{a\in \M{A}} \big(\xs\big)_{1\leq j\leq M} $} 
&\geq\mathbf{P}\bigg\{\textstyle\bigcap\limits_{i=1}^{N}\left\{\Big{|}\T \We^\ast_{k+1}\left(\xb\right)-\Te \We^\ast_{k+1}(\xb)\Big{|}^p\leq  \epsilon_1^p\right\}\bigg\}
=\textstyle\prod\limits_{i=1}^{N} \p{\Big{|}\T \We^\ast_{k+1}\left(\xb\right)-\Te \We^\ast_{k+1}(\xb)\Big{|} \leq  \epsilon_1}.
\shortintertext{%
Let us now express the argument of the probability operator as follows}
&\Big{|}\T \We^\ast_{k+1}\left(\xb\right)-\Te \We^\ast_{k+1}(\xb)\Big{|}\\
&=\Big{|}\max_{a\in\A}\E_{\xknext}\left[\ind_K(\xknext)+\ind_{A\setminus K}(\xknext)\We^\ast_{k+1}(\xknext)\right] \\&-\max_{a\in\A} \frac{1}{M} \textstyle\sum\limits_{j=1}^{M}[\ind_{K}(\xs)+\ind_{A\setminus K}(\xs)\We^\ast_{k+1}(\xs)] \Big{|}\\
\shortintertext{W.r.t $\E_{\xknext}$ defined over random variable $\xknext\sim \Tx{\cdot}{\xb,a}$}
&\leq\max_{a\in\A}\Big{|}\E_{\xknext}\left[\ind_K(\xknext)+\ind_{A\setminus K}(\xknext)\We^\ast_{k+1}(\xknext)\right]\\&- \frac{1}{M} \textstyle\sum\limits_{j=1}^{M} [\ind_{K}(\xs)+\ind_{A\setminus K}(\xs)\We^\ast_{k+1}(\xs)] \Big{|}.
\end{align*}
Therefore the probability of the last event above can be lower bounded by the probability associated to several independent events over the finite action space, as follows:
\begin{flalign*}
\!\!&\p{\Big{|}\T \We^\ast_{k+1}\left(\xb\right)-\Te \We^\ast_{k+1}(\xb)\Big{|} \leq  \epsilon_1 }\\
&\geq\textstyle\prod\limits_{a\in\M{A}} \mathbf{P}\Big\{ \Big{|}\E_{\xknext}\left[\ind_K(\xknext)+\ind_{A\setminus K}(\xknext)\We^\ast_{k+1}(\xknext)\right] \\&-  \frac{1}{M} \textstyle\sum\limits_{j=1}^{M}\left[\ind_{K}(\xs)+\ind_{A\setminus K}(\xs)\We^\ast_{k+1}(\xs) \right]\Big{|} \leq  \epsilon_1 \Big\}.
\end{flalign*}
For a given base point $\xb\in\M{X}$, action $a\in\M{A}$, and function $ \We^\ast_{k+1}\in\M{\W}$,
define random variables $Z_j$ via their realizations $\ind_K(\xs)+ \ind_{A\setminus K}(\xs) \We^\ast_{k+1}(\xs)$, with $j =1,\ldots,M$.
Since each $\xs$ is independently drawn from $\Tx{\cdot}{\xb,a}$,
the random variables $Z_j$ are independent, identically distributed,
and take values within the closed interval $[0,1]$.
By application of Hoeffding's inequality (as in Proposition \ref{thm:Hoeffding}),
the concentration of the $M$ samples around the expected value of $Z_j$ can be expressed as
\begin{align*}
&\mathbf{P}\Big\{\big{|}\E_{\xknext}\left[\ind_K(\xknext)+\ind_{A\setminus K}(\xknext)\We^\ast_{k+1}(\xknext)\right] \\&-\frac{1}{M}\textstyle\sum\limits_{j=1}^{M}\left[ \ind_K(\xs)+ \ind_{A\setminus K}(\xs) \We^\ast_{k+1}(\xs)\right]
 \big{|}\epsilon_1\Big\}\leq 2e^{-2M(\epsilon_1 )^2}.
\shortintertext{
Therefore as long as $ 0 \leq 2e^{-2M(\epsilon_1 )^2}\leq 1$, it follows that 
$\p{\|\T \We^\ast_{k+1}-\Te \We^\ast_{k+1} \|_{p,\hat{\eta}}\leq  \epsilon_1}\geq \big(1-2e^{-2M(\epsilon_1 )^2}\big)^{N|\A|}$.}
\end{align*}
\end{proof}
\begin{remark}
As long as we only know that the random variables $Z_j$ are bounded,
the use of Hoeffding's inequality is sufficient.
If we further have information on the variance of $Z_j$,
one can leverage the inequalities of Chebyshev and of Bienaym-Chebyshev \citep{Hoeffding},
or alternatively Bernstein's inequality \citep{Peshkin}:
the former bounds are only function of the variance, whereas the latter inequality depends not only on variance of $Z_j$ but also on its bounded domain.
Upper bounds on either the variance of $Z_j$ or on its range can be derived
exploiting prior knowledge on properties of the function space $\M{W}$ and of the distribution $\Tx{\cdot}{x,a}$.
\end{remark}


\section{Proof of Lemma \ref{thm:NumberSamples}}\label{proof:NumberSamples}

We derive a general, analytical bound on the error of a single backward recursion using notions from statistical learning theory 
\citep{Haussler,Pollard1984}.
The error bound takes into account that,
for any $\T\We^\ast_{k+1}$,
the optimal fit can be anywhere in the function class.
Furthermore the bound will be distribution-free,
namely holding for any Markov process (with dynamics characterized by $T_x$) and any sample distribution $\eta$ over the set $A\setminus K$.

We exclusively consider function classes $\M{W}\subset B(\X;1)$ endowed with a finite pseudo-dimension:
this includes all finitely-parameterized function classes \citep{Remi}.
The notion of pseudo dimension \citep{Pollard1984,MAnthony,Haussler} expresses the capability of a function class $\M{W}$ to the fit a set of samples.

\begin{proof}
In order to prove Lemma \ref{thm:NumberSamples}, we show that the inequality in (\ref{mythm:1}) holds for any $\We^\ast_{k+1}\in \M{W}$ at any time instant $k=0,\ldots, N_t-1$.
For the sake of notation in the following we substitute $\We^\ast_{k+1}$ by $W$,
and instead of considering the set of base points $(\xb)_{1\leq i \leq N}$ drawn at the time instant $k$ we simply introduce $\vec{x}=(x^1,\ldots,x^N)$ as a sequence of $N$ independent realizations drawn from a distribution over $A \setminus K$ with density $\eta$.

For any given function $W\in \M{W}$ ,
induce a new function class $l_{\M{W}}=\{|w-\T W|^p:w \in \M{W}\}$ with elements $l_w\in l_{\M{\W}}: l_w=|w-\T\W|^p$.
The inequality in \eqref{mythm:1} can be rewritten over the function class $l_{\M{W}}$ as follows
\begin{flalign*}
& \mathbf{P}\Big\{\sup_{w \in \M{W}}\big{|}\|w-\T W\|_{p,\eta}^p-\|w-\T W\|_{p,\hat{\eta}}^p\big{|}\geq \epsilon_2^p\Big\}  =\mathbf{P}\Big\{\sup_{l_w \in l_\M{W}}\big{|}\mathbf{E}_{\eta}\left[l_w\right]-\frac{1}{N}\textstyle\sum\limits_{i=1}^Nl_w(x^i) \big{|}\geq \epsilon_2^p \Big\},
\end{flalign*}
where $\mathbf{E}_{\eta}$ denotes the expected value with respect to $\eta$.
This allows us to use a result in \citep{Pollard1984},
which provides an upper-bound on the probability of the above event as a function of the covering number of the metric space $((l_{\M{W}})_{\vec{x}},\|\cdot\|_{1} )$.
\begin{proposition}[\cite{Pollard1984}]\label{thm:Pollard}
 Let $\M{F}$ be a  permissable set of functions on $\M{X}$ with $0\leq f(x)\leq K $ for all $f\in \M{F}$ and $x \in \M{X}$.
 Let $\vec{x}=(x_1,\ldots,x_N)$ be a sequence of $N$ samples drawn independently from $\M{X}$ according to any distribution on $\M{X}$. Then for all $\epsilon>0$
\begin{align}
\textstyle \mathbf{P}\Big\{\forall f \in \M{F}:\big|\E{f} - \frac{1}{N} \sum_{\vec{x}} f(x_i) \big|\geq\epsilon\big\}  \leq 4 \E{\M{N}(\epsilon/16,F_{|\vec{x}},\|\cdot\|_1)}e^{-\frac{N\epsilon^2}{128K^2}},
\end{align}
where the quantity $\M{N}$ will be introduced shortly and where the definition of a permissible set of functions \citep{Pollard1984} includes all finitely parameterized functions.
\end{proposition}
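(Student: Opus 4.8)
The plan is to reconstruct the classical double-sample (symmetrization) argument of \citep{Pollard1984}, whose three ingredients are a ghost-sample symmetrization, a discretization of the function class by a covering net, and Hoeffding's inequality applied to each net element. Throughout write $Pf=\E{f}$ for the population mean and $P_N f=\frac1N\sum_{i=1}^N f(x_i)$ for the empirical mean on $\vec{x}$, so that the quantity to be bounded is $\p{\sup_{f\in\M{F}}|P_N f-Pf|\geq\epsilon}$. The permissibility hypothesis on $\M{F}$ is exactly what guarantees that this supremum, and all the suprema appearing below, are measurable, so that the probabilistic manipulations are legitimate.

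First I would introduce an independent ghost sample $\vec{x}'=(x_1',\dots,x_N')$ from the same distribution, with empirical mean $P_N'f$, and establish the symmetrization inequality $\p{\sup_f|P_N f-Pf|\geq\epsilon}\leq 2\,\p{\sup_f|P_N f-P_N'f|\geq\epsilon/2}$. The argument measurably selects a function $f_0$ nearly attaining the supremum on the first sample; conditionally on $\vec{x}$, a Chebyshev bound on the ghost sample shows $|P_N'f_0-Pf_0|\leq\epsilon/2$ with probability at least $1/2$ (using $0\le f\le K$ to control the variance, and the harmless regime where $N\epsilon^2$ is not small relative to $K^2$, outside of which the claimed bound already exceeds $1$), which transfers the deviation onto $|P_N f_0-P_N'f_0|$. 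Next, since the pairs $(x_i,x_i')$ are exchangeable, I would insert independent Rademacher signs $\sigma_i\in\{-1,+1\}$ without altering the distribution, reducing the problem to $\p{\sup_f|\frac1N\sum_i\sigma_i(f(x_i)-f(x_i'))|\geq\epsilon/2}$.

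I would then condition on $\vec{x},\vec{x}'$ and discretize. Let $\M{N}(\epsilon/16,\M{F}_{|\vec{x}},\|\cdot\|_1)$ be the size of a minimal $\epsilon/16$-net of the restricted class in the empirical $\ell_1$ distance; replacing any $f$ by its nearest net element $g$ perturbs the sign-weighted average by at most the empirical $\ell_1$ distance on each sample, so the supremum over $\M{F}$ is controlled by the maximum over the finite net after a slight shrinking of the threshold. For each fixed net element $g$, conditionally on the data the variables $\sigma_i(g(x_i)-g(x_i'))$ are independent, mean zero, and confined to an interval of length $2K$, so Hoeffding's inequality yields an exponential tail of the form $e^{-N\epsilon^2/(128K^2)}$. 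A union bound over the net multiplies this by $\M{N}(\epsilon/16,\M{F}_{|\vec{x}},\|\cdot\|_1)$, and taking expectations over $\vec{x},\vec{x}'$ produces the expected covering number of the statement, with the leading constant $4$ absorbing the factor $2$ from symmetrization.

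The main obstacle is the bookkeeping of constants: the covering radius $\epsilon/16$, the leading factor $4$, and the exponent denominator $128K^2$ are all forced by the precise way the threshold $\epsilon$ is split across the ghost-sample step, the net-approximation step, and the Hoeffding range $2K$, and these must be tracked jointly rather than optimized in isolation. A secondary subtlety is the measurability of the near-maximizing $f_0$ and of the restricted covering number as a function of the sample, which is exactly where the permissibility assumption is used; here I would invoke the measurable-selection apparatus of \citep{Pollard1984} rather than reprove it.
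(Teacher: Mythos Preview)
The paper does not supply its own proof of this proposition: it is quoted verbatim as a black-box result from \citep{Pollard1984} inside the proof of Lemma~\ref{thm:NumberSamples} and then applied. There is therefore no ``paper's proof'' to compare against; your reconstruction via ghost-sample symmetrization, Rademacher sign randomization, covering-net discretization, and Hoeffding on each net element is exactly the classical Pollard argument the citation points to.

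One technical point worth tightening. The covering number in the stated bound is $\M{N}(\epsilon/16,\M{F}_{|\vec{x}},\|\cdot\|_1)$, taken on the \emph{first} sample $\vec{x}$ only. Your discretization step, as written, controls $\frac{1}{N}\sum_i\sigma_i\big(f(x_i)-f(x_i')\big)$ and would naively call for a net on the \emph{combined} $2N$-sample $(\vec{x},\vec{x}')$. To land on a covering over $\vec{x}$ alone with the stated constants, Pollard's route is to pass (via a further triangle/symmetry step) to a one-sample sign process $\sup_f\big|\frac{1}{N}\sum_i\sigma_i f(x_i)\big|$ before covering, so that the net approximation error is measured purely in the empirical $\ell_1$ distance on $\vec{x}$. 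Make explicit which variant you are invoking so that the radius $\epsilon/16$, the leading factor $4$, and the exponent $N\epsilon^2/(128K^2)$ all come out consistently; otherwise the bookkeeping you flag as the main obstacle will not close.
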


Let us introduce the concept of covering number of a metric space \citep{Haussler}.
Given a (pseudo-)metric space $(A,\rho)$ and a subset $S$ of $A$, we say that the set $T\subseteq A$  is an $\epsilon$-cover for $S$ (where $\epsilon>0$) if, for every $s\in S$ there is a $t\in T$ such that $\rho(s,t)<\epsilon$.
For a given $\epsilon>0$ we denote the covering number $\M{N}(\epsilon,S,\rho)$ \citep{Haussler} as the cardinality of the smallest $\epsilon$-cover of $S$.
%

For a given set of samples $x^i$ with $i=1,\ldots,N$, the evaluation of a function $l_w\in l_{\M{W}}$ over each of these samples is given as the $N$ dimensional vector in $[0,1]^N$: $(l_w)_{|\vec{x}}=(l_w(x^1),l_w(x^2),\ldots,l_w(x^N))$.
The induced set of vectors is
\begin{align*}
&(l_\M{W})_{|\vec{x}} =\{(l_w)_{|\vec{x}}=(l_w(x^1),l_w(x^2),\ldots,l_w(x^N)), l_w\in l_{\M{W}}\} \subseteq [0,1]^N. \end{align*}
The minimal $\epsilon$-cover of $((l_\M{W})_{|\vec{x}} ,\|\cdot\|_{1})$ is denoted as $\M{N}(\epsilon,(l_\M{W})_{|\vec{x}},\|\cdot\|_1)$.

The deviation of the expected value from the empirical mean can be bounded using Pollard's proposition \citep{Pollard1984}
\begin{flalign*}\textstyle
& \textstyle\mathbf{P}\Big\{\sup_{l_w \in l_\M{W}}\big{|}\mathbf{E}_x\left[l_w(x)\right]-\frac{1}{N}\sum_{i=1}^Nl_w(x_i) \big{|}\geq  \epsilon_2^p \Big\}\textstyle\leq  4 \E\left[\M{N}(\epsilon_2^p/16,\left(l_\M{W}\right)_{|\vec{x}},\|\cdot\|_1)\right]e^{-\frac{N( \epsilon_2)^{2p} }{128}}.
\end{flalign*}
The expected value of $\M{N}(\epsilon_2^p/16,\left(l_\M{W}\right)_{|\vec{x}},\|\cdot\|_1)$ is computed over the samples $x^i$ of $\vec{x}$,
drawn independently from a probability distribution with density $\eta$.
Since there is a trivial isometry \citep{Haussler} between  $({l_\M{W}}_{|\vec{x}},\|\cdot\|_{1})$ and $(l_\M{W},\|\cdot\|_{1,\hat{\eta}})$, both spaces have equal covering numbers
 \begin{align*}
 \M{N}(\epsilon_2^p/16,{l_\M{W}}_{|\vec{x}},\|\cdot\|_{1})=\M{N}(\epsilon_2^p/16,l_\M{W},\|\cdot\|_{1,\hat{\eta}}).\end{align*}

In practice a value for $\E\left[\M{N}(\epsilon_2^p/16,l_\M{W},\|\cdot\|_{1,\hat{\eta}})\right]$ can be obtained by upper bounding $\M{N}{(\epsilon_2^p/16,l_\M{W},\|\cdot\|_{1,\hat{\eta}})}$ independently of the sample distribution.
For this we introduce the pseudo dimension of a function class,
formally defined as follows \cite{Pollard1984,MAnthony,Haussler}.
Suppose $\M{F}$ is a class of functions, $f \in \M{F},f:\M{X}\rightarrow [0,1]$. Then $S \subseteq \M{X}$ is shattered by $\M{F}$ if there are numbers $r_x\in [0,1]$ for $x \in \M{S}$ such that for every $T\subseteq S$ there is some $f_T\in\M{F}$ with the property that $f_{T}\geq r_x$ if $x\in T$ and $f_T<r_x$ if $x \in S\setminus T$. We say that $\M{F}$ has a finite pseudo dimension $\pseudo{\M{F}}=d$ if $d$ is the maximum cardinality of a shattered set. \label{page:pseudodim}

For any distribution $P\in M(\X)$, the packing number \citep{Haussler} and therefore also tho covering number of the metric space $(l_\M{W},\|\cdot\|_{1,P})$ can be upper bounded as a function of the pseudo-dimension and the base of the natural logarithm $e$:
%
for any
$\epsilon>0$,
 \begin{align*}\textstyle
\M{N}(\epsilon,l_\M{W},\|\cdot\|_{1,P})\leq e(d+1)\left(\frac{2e}{\epsilon} \right)^d,  \textmd{ with } \dim_p(l_\M{W})=d.
\end{align*}
We have proved that a sufficient upper bound is given as
 \begin{align*} \textstyle
&\textstyle \p{\textstyle\sup_{w \in \M{W}}\big{|}\|w-\T W\|_{p,\eta}^p-\|w-\T W\|_{p,\hat{\eta}}^p\big{|}\geq \epsilon_2^p}\textstyle\leq  4
e(d+1)\left(\frac{32e}{\epsilon_2^p} \right)^d
e^{-\frac{N( \epsilon_2)^{2p} }{128}}.
\end{align*}
The proof can be concluded by showing that the pseudo dimension $d$ of the induced class $l_\M{W}$ is the same as the pseudo dimension of $\M{W}$.
Let $\left\{ w-\T W : w\in\M{W} \right\}$ be a new function class induced from $\M{W}$.
The invariance properties of the pseudo dimension $\dim_p(\M{W})$ shown in \citep{Haussler} allow to conclude that
$\dim_p(\left\{ w-\T W  \big{|} w\in\M{W} \right\})=\dim_p(\M{W})$.
The induced function class $l_\M{W}$ can then be defined as follows:
$l_\M{W}=\left\{ \left| k\right|^p \big{|} k\in\left\{ w-\T W : w\in\M{W} \right\}\right\}$.
Since it was shown in \citep{Kearns1990} that the pseudo dimension is invariant over function composition ($\left| \cdot\right|^p$),
we conclude that the pseudo dimension is $\dim_p(l_\M{W})=\dim_p(\left\{ w-\T W: w\in\M{W} \right\})=\dim_p(\M{W})=d$.
\end{proof}


\begin{remark}[Computing the pseudo-dimension]
When the function class $\M{W}$ is a vector space of real-valued functions,
the pseudo dimension is equal to the dimensionality of the function class \cite[Theorem 11.4]{NeuralNetworkLearning}.
\citep{NeuralNetworkLearning} elaborates the details of the computation of pseudo dimensions of parameterized function classes,
especially for function classes defined over neural networks.
\end{remark}

Since it is possible to bound the pseudo dimension of $l_{\M{W}}$ (as introduced in the proof)  by the pseudo dimension of $\M{W}$,
this capacity concept has been used to bound the error caused by using an empirical estimate of the weighted $p$-norm.
Notice that for non-parametric function classes, concepts such as covering number or Rademacher average of the function class $l_\M{W}$ can be used instead \citep{LocalRademacherComplexities}.

Let us shortly discuss how the derived bounds can be tightened.
A first option is to circumvent the notion of pseudo dimension and work with the covering numbers in Pollard inequality (Proposition \ref{thm:Pollard}),
however the increase in assumptions on the function class and in overall computations make the gain in accuracy undeserving.
A second option is to explore alternatives over Pollard inequality in \eqref{thm:Pollard}
with better constants \citep{LocalRademacherComplexities}.
An alternative concentration inequality based on Bernstein's inequality is used in \citep{Peshkin}.
Hoeffding inequality gives a concentration inequality on the sum of bounded random variables, whereas Bernstein inequality gives a tighter bound based on knowledge of both the boundness and the variance of the random variables.
Even with improved constants or alternative inequalities, the error bounds can still result to be conservative for reasonable sample complexities.


\section{Proof of Theorem \ref{lem:errorsinglestep}}\label{proof:errorsinglestep}

The proof of Theorem \ref{lem:errorsinglestep} is adapted from the proof of the single-step error bound for Fitted Value Iteration with multiple sample batches in \citep{Remi}.
\begin{proof}
Let us introduce a simplified notation for $\We^\ast_{k+1}$ by replacing it with a general function $W'\in\M{W}$ that minimizes the empirical norm as $W'= \arg\min_{w \in \M{W}}\|w-\Te W\|_{p,\hat{\eta}}$.   
Let us further define a space $\Omega$ for the batch of samples drawn at any of the iterations,
such that at any instant $k$ the realized sample batch
$\omega:=\bigcup_{i\in \{1,\ldots,N\}}\left(\xb \cup \left(\bigcup_{a\in \M{A}} \left(\xs\right)_{1\leq j\leq M}\right)\right)$ is an element of the sample space, $\omega \in \Omega$.

For any given $\epsilon'>0$, 
consider a function $w^\ast\in\M{W}$ such that
$\|w^\ast-\T W\|_{p,\eta}\leq \inf_{w\in \M{W}}\|w-\T W\|_{p,\eta}+\epsilon'$
(this in particular holds since $\M{W}$ has been assumed to be close and bounded).

The error bound in \eqref{thm:error} holds for a sample realization $\omega$ if the following sequence of inequalities holds simultaneously:
\begin{subequations}\label{ineq:seq}
\begin{align}
\|W'-\T W\|_{p,\eta}&\leq \|W'-\T W\|_{p,\hat{\eta}}+\epsilon_2\label{ineq:seq1}\\
 &\leq \|W'-\Te W\|_{p,\hat{\eta}}+\epsilon_1+\epsilon_2\label{ineq:seq2}\\
 &\leq \| w^\ast-\Te W\|_{p,\hat{\eta}}+\epsilon_1+\epsilon_2\label{ineq:seq3}\\
 &\leq  \|w^\ast-\T W\|_{p,\hat{\eta}} +2\epsilon_1 +\epsilon_2 \label{ineq:seq4}\\
 &\leq  \|w^\ast-\T W\|_{p,\eta}+2\epsilon_1+2\epsilon_2.\label{ineq:seq5}
\end{align}
\end{subequations}
As long as the previous sequence of inequalities is true, the following one also holds:
 $$
 \|W'-\T W\|_{p,\eta}\leq  \inf_{w \in \M{W}} \|w-\T W\|_{p,\eta}+2\epsilon_1+2\epsilon_2+\epsilon'.
$$
We claim that the sequence of inequalities holds with a probability at least $1-(\delta_1+\delta_2)$.
Since there exists a function $w^\ast$ for any $\epsilon'>0$ it follows with a probability at least $1-(\delta_1+\delta_2)$ that
 $$
 \|W'-\T W\|_{p,\eta}\leq d_{p,\eta}(\T W,\M{W})+2\epsilon_1+2\epsilon_2.
$$
By the union bound argument \citep{MAnthony},
the probability of the union of events can be bounded by the sum of the probabilities of the single events.
Using this argument it is possible to define a lower bound on the probability associated with the simultaneous occurrence of the five inequalities in (\ref{ineq:seq}).
We first show that the third inequality is always true.
Then we give the probability associated to the first inequality (\ref{ineq:seq1}) and the fifth (\ref{ineq:seq5}) (this is based on \eqref{prob:2}).
Afterwards we provide an upper bound on the probability associated to the second and fourth inequalities (\ref{ineq:seq2}),(\ref{ineq:seq4}),
based on the bound given in \eqref{prob:1}.

The third inequality \eqref{ineq:seq3} is true for the whole sample space $\Omega$ due to the choice of $W'$. For all functions $w$ in $\M{W}$ it follows that
$\|W'-\Te W\|_{p,\hat \eta}\leq\|w-\Te W\|_{p,\hat \eta}$ holds, because $W'= \arg\min_{w \in \M{W}}\|w-\Te W\|_{p,\hat{\eta}}$.

The first and last inequalities (\ref{ineq:seq1}),(\ref{ineq:seq5}) bound the deviation between the empirical loss and the expected loss. This can be bounded with the worst case error.
Firstly we observe that the inequality
$$\Big{|}\|w -\T W\|_{p,\hat{\eta}}- \|w -\T W\|_{p,\eta}\Big{|}^p\leq \Big{|}\|w-\T W\|_{p,\eta}^p-\|w-\T W\|_{p,\hat{\eta}}^p\Big{|}$$
is always true.
In the case that $\|w -\T W\|_{p,\hat{\eta}}\leq\|w -\T W\|_{p,\eta}$ then
\begin{align*}
\Big{|}\|w -\T W\|_{p,\hat{\eta}}- \|w -\T W\|_{p,\eta}\Big{|}& = \|w -\T W\|_{p,\hat{\eta}}- \|w -\T W\|_{p,\eta}\\
 \|w -\T W\|_{p,\hat{\eta}}& =\left( \|w -\T W\|_{p,\hat{\eta}}- \|w -\T W\|_{p,\eta}\right) + \|w -\T W\|_{p,\eta}\\
\|w -\T W\|^p_{p,\hat{\eta}}& =\left(\left( \|w -\T W\|_{p,\hat{\eta}}- \|w -\T W\|_{p,\eta}\right) + \|w -\T W\|_{p,\eta}\right)^p\\
\|w -\T W\|^p_{p,\hat{\eta}}&\geq \left( \|w -\T W\|_{p,\hat{\eta}}- \|w -\T W\|_{p,\eta}\right)^p + \|w -\T W\|_{p,\eta}^p\\
\|w -\T W\|^p_{p,\hat{\eta}} -\|w -\T W\|_{p,\eta}^p &\geq \left( \|w -\T W\|_{p,\hat{\eta}}- \|w -\T W\|_{p,\eta}\right)^p\\
\Big{|}\|w -\T W\|^p_{p,\hat{\eta}} -\|w -\T W\|_{p,\eta}^p \Big{|}& \geq \Big{|} \|w -\T W\|_{p,\hat{\eta}}- \|w -\T W\|_{p,\eta}\Big{|}^p
\end{align*}
On the other hand,
for the case when $\|w -\T W\|_{p,\hat{\eta}}>\|w -\T W\|_{p,\eta}$ a similar argument can be used.
We can then observe that
 $$\Big{|}\|w^\ast-\T W\|_{p,\hat{\eta}}- \|w^\ast-\T W\|_{p,\eta}\Big{|}^p\leq\sup_{w \in \M{W}}\Big{|}\|w-\T W\|_{p,\eta}^p-\|w-\T W\|^p_{p,\hat{\eta}}\Big{|},$$
 and that
 $$\Big{|}\|W'-\T W\|_{p,\eta}-\|W'-\T W\|_{p,\hat{\eta}}\Big{|}^p\leq\sup_{w \in \M{W}}\Big{|}\|w-\T W\|_{p,\eta}^p-\|w-\T W\|_{p,\hat{\eta}}^p\Big{|}.$$ 
Given two functions $w^\ast$ and $W'$ define events $A_1$ and $A_2$
\begin{equation*}
\begin{aligned}
A_1:\  \epsilon_2^p< \Big{|}\|w^\ast-\T W\|_{p,\hat{\eta}}- \|w^\ast-\T W\|_{p,\eta}\Big{|}^p,\quad
A_2:\ \epsilon_2^p<\Big{|}\|W'-\T W\|_{p,\eta}-\|W'-\T W\|_{p,\hat{\eta}}\Big{|}^p. \end{aligned}
\end{equation*}
 Observe that the event sets $A_1$ and $A_2$ are subsets of the more general event $B$ defined as
 \begin{equation*}
\begin{aligned}B:&\quad\epsilon_2^p<\sup_{w \in \M{W}}\Big{|}\|w-\T W\|_{p,\eta}^p-\|w-\T W\|_{p,\hat{\eta}}^p\Big{|}. \end{aligned}
\end{equation*}  
 Thus it follows that for any $\epsilon_2>0$:  $\p{A_1\cup A_2}\leq \p{B}$ and, 
based on \eqref{prob:2}, we have
\begin{align*}\textstyle
 \p{\Big\{\big{|}\|W'-\T W\|_{p,\eta}-\|W'-\T W\|_{p,\hat{\eta}}\big{|}>\epsilon_2\Big\}\cup \Big\{ \big{|}\|w^\ast-\T W\|_{p,\hat{\eta}}- \|w^\ast-\T W\|_{p,\eta}\big{|}>\epsilon_2\Big\}}\notag \allowdisplaybreaks[0] \\
\leq \p{\sup_{w\in\M{W}}\big{|}\|w-\T W\|_{p,\eta}^p-\|w-\T W\|_{p,\hat{\eta}}^p\big{|}> \epsilon_2^p}\leq \delta_2.
\end{align*}
Thus the probability that the inequalities (\ref{ineq:seq1}) and (\ref{ineq:seq5}) do not hold is less then $\delta_2$.

The second and fourth inequalities (\ref{ineq:seq2}),(\ref{ineq:seq4}) depend the accuracy of the estimation of the backward recursion at each base point $\xb$.
Employing the inequality $\left| \|w-g\|_{p,\hat{\eta}}-\|w-h\|_{p,\hat{\eta}}\right|\leq \|g-h\|_{p,\hat{\eta}}$,
we can see that
\begin{align*}
\left|\|W'-\T W\|_{p,\hat{\eta}}- \|W'-\Te W\|_{p,\hat{\eta}}\right|\leq \|\T W-\Te W\|_{p,\hat{\eta}}, 
\shortintertext{ and } 
\left| \| w^\ast-\Te W\|_{p,\hat{\eta}}- \|w^\ast-\T W\|_{p,\hat{\eta}}\right|\leq  \|\T W-\Te W\|_{p,\hat{\eta}}.
\end{align*}
For every sample set $\omega$ the inequalities (\ref{ineq:seq2}),(\ref{ineq:seq4}) apply if $\|\T W-\Te W\|_{p,\hat{\eta}}\leq \epsilon_1$.
Thus
 \begin{align}\textstyle
\p{\textstyle\left\{\|W'-\T W\|_{p,\hat{\eta}}- \|W'-\Te W\|_{p,\hat{\eta}}> \epsilon_1\right\}\cup \left\{\| w^\ast-\Te W\|_{p,\hat{\eta}}- \|w^\ast-\T W\|_{p,\hat{\eta}}>\epsilon_1\right\}}\notag\\\leq \p{\|\T W -\Te W \|_{p,\hat{\eta}}> \epsilon_1}\leq \delta_1.
\end{align} The probability that at least one of the inequalities in \eqref{ineq:seq} does not hold can be expressed using the union bound as $\delta_1+\delta_2$. Thus the sequence of inequalities holds with at least a probability of $1-\delta_1-\delta_2$.
\end{proof}



\section{Proof of Lemma \ref{lem:errorpropagation}} \label{proof:errorpropagation}

\noindent \begin{proof}
Let us set up the following chain of inequalities:
\begin{align*}
&\|\T^{N_t-k}\We^\ast_{N_t}-\We^\ast_k\|_{p,\eta}\allowdisplaybreaks\\
&=\mbox{[ Add and subtract function $\T\We^\ast_{k+1}$ ]}\\
&=\|\T \left(\T^{N_t-k-1}\We^\ast_{N_t}-\We^\ast_{k+1}+\We^\ast_{k+1}\right)-\We^\ast_k\|_{p,\eta}\\
&=\mbox{[ Definition of $\T$ in (\ref{eq:T}), where we have considered a single $x_k \sim \eta$ ]}\\
&=\Big\|\max_{a\in\M{A}}\mathbf{E}\Big{[} \ind_{K}(x_{k+1})+\ind_{A\setminus K}(x_{k+1}) \left(\T^{N_t-k-1}\We^\ast_{N_t}-\We^\ast_{k+1}+\We^\ast_{k+1}\right)(x_{k+1})
\big{|}x_{k+1}\sim \Tx{\cdot}{x_k,a}\Big{]} \allowdisplaybreaks\\
&\qquad -\We^\ast_k\Big\|_{p,\eta}\\
&= \mbox{ [ $\max\E{[\xi_1+\xi_2]}\leq \max\E{|\xi_1|}+\max\E{|\xi_2|}$ ] }\allowdisplaybreaks\\
&\leq\left\|\max_{a\in\M{A}}\mathbf{E}\Big{|}\ind_{A\setminus K}(x_{k+1}) \left(\T^{N_t-k-1}\We^\ast_{N_t}-\We^\ast_{k+1}\right)(x_{k+1})|x_{k+1}\sim \Tx{\cdot}{x_k,a}\Big{|}\right.\\
&\left.\qquad +\max_{a\in\M{A}}\mathbf{E}\Big{|} \ind_{K}(x_{k+1})+\ind_{A\setminus K}(x_{k+1}) \We^\ast_{k+1}(x_{k+1})|x_{k+1}\sim \Tx{\cdot}{x_k,a}\Big{|}-\We^\ast_k\right\|_{p,\eta}\allowdisplaybreaks\\
&=\mbox{[ Triangular inequality]}\allowdisplaybreaks\\
&\leq\left\|\max_{a\in\M{A}}\mathbf{E}\Big{|} \ind_{A\setminus K}(x_{k+1}) \left(\T^{N_t-k-1}\We^\ast_{N_t}-\We^\ast_{k+1}\right)(x_{k+1})|x_{k+1}\sim \Tx{\cdot}{x_k,a}\Big{|} \right\|_{p,\eta}\\
&\qquad+\left\|\max_{a\in\M{A}}\mathbf{E}\Big{|}  \ind_{K}(x_{k+1})+\ind_{A\setminus K}(x_{k+1}) \We^\ast_{k+1}(x_{k+1})|x_{k+1}\sim \Tx{\cdot}{x_k,a}\Big{|} -\We^\ast_k\right\|_{p,\eta}\\
&=\mbox{[ Definition of $\T$ in (\ref{eq:T}) ]}\\
&=\left\|\max_{a\in\M{A}}\mathbf{E}\Big{[}\left|\ind_{A\setminus K}(x_{k+1}) \left( \T^{N_t-k-1}\We^\ast_{N_t}-\We^\ast_{k+1}\right) (x_{k+1}) \right|\,|x_{k+1}\sim \Tx{\cdot}{x_k,a}\Big{]}\right\|_{p,\eta}\allowdisplaybreaks\\
&\qquad+\left\|\T \We^\ast_{k+1}-\We^\ast_k\right\|_{p,\eta}\\
&=\mbox{[ Introduce density function $\tx{x_{k+1}}{x_k,a}$ for kernel $T_x$ ]}
\\
&=\left\|\max_{a\in\M{A}}\int_{\X}\left|\ind_{A\setminus K}(x_{k+1})\left(\T^{N_t-k-1}\We^\ast_{N_t}-\We^\ast_{k+1}\right) (x_{k+1}) \right| \txo(x_{k+1}|x_k,a)dx_{k+1}\right\|_{p,\eta}\left\|\T \We^\ast_{k+1}-\We^\ast_k\right\|_{p,\eta}.
\shortintertext{%
Let us now show that the first term is bounded by \(B^{\frac{1}{p}} \left\|\T^{N_t-k-1}\We^\ast_{N_t}-\We^\ast_{k+1}\right\|_{p,\eta}\):
}
%
&\left\|\max_{a\in\M{A}}\int_{\X}\left|\ind_{A\setminus K}(x_{k+1})\left(
\T^{N_t-k-1}\We^\ast_{N_t}-\We^\ast_{k+1}\right)\right|\txo(x_{k+1}|x_k,a)dx_{k+1}\right\|_{p,\eta}\\
&=\mbox{[ monotonicity of $L_p$-norms with respect to a probability measure ]}\\
&\leq\left\|\max_{a\in\M{A}}\left(\int_{\X}\left|\ind_{A\setminus K}(x_{k+1})\left(
\T^{N_t-k-1}\We^\ast_{N_t}-\We^\ast_{k+1}\right)\right|^p\txo(x_{k+1}|x_k,a)dx_{k+1}\right)^{\frac{1}{p}}\right\|_{p,\eta}\allowdisplaybreaks\\
& = \mbox{[ Express the $\eta$-weighted $p$-norm over $A\setminus K$ ]}\allowdisplaybreaks \\
&\textstyle=\left(\int_{A\setminus K} \left|\max_{a\in\M{A}}   \left|\int_{A\setminus K}   \left|
\T^{N_t-k-1}\We^\ast_{N_t}-\We^\ast_{k+1}\right|^p     \txo(x_{k+1}|x_k,a)dx_{k+1}\right|^{\frac{1}{p}}    \right|^{p}\eta(x_k) dx_k\right)^{\frac{1}{p}}\allowdisplaybreaks\\
&= \textstyle\left(\int_{A\setminus K}\max_{a\in\M{A}}\left|\left|\int_{A\setminus K}\left|
\T^{N_t-k-1}\We^\ast_{N_t}-\We^\ast_{k+1}\right|^p\txo(x_{k+1}|x_k,a)dx_{k+1}\right|^{\frac{1}{p}}\right|^{p}\eta(x_k)  dx_k\right)^{\frac{1}{p}}\allowdisplaybreaks\\
&=\textstyle \left(\int_{A\setminus K}\max_{a\in\M{A}} \int_{A\setminus K}\left|
\T^{N_t-k-1}\We^\ast_{N_t}-\We^\ast_{k+1}\right|^p\txo(x_{k+1}|x_k,a) dx_{k+1} \eta(x_k)dx_k\right)^{\frac{1}{p}}\allowdisplaybreaks\\
&\textstyle\leq
\left(\int_{A\setminus K} \int_{A\setminus K}\max_{a\in\M{A}}\left(\left|
\T^{N_t-k-1}\We^\ast_{N_t}-\We^\ast_{k+1}\right|^p\txo(x_{k+1}|x_k,a)\right)dx_{k+1}\eta(x_k) dx_k\right)^{\frac{1}{p}}\allowdisplaybreaks\\
&=\mbox{[ Introduce dummy term $\frac{\eta(x_{k+1})}{\eta(x_{k+1})}$, which is defined over $x_{k+1}\in A\setminus K$ ]}\allowdisplaybreaks\\
&=\textstyle
\left(\int_{A\setminus K} \int_{A\setminus K}\max_{a\in\M{A}}\left(\left|
\T^{N_t-k-1}\We^\ast_{N_t}-\We^\ast_{k+1}\right|^p  \txo(x_{k+1}|x_k,a) \right) \frac{\eta(x_k)}{\eta(x_{k+1})}dx_k\eta(x_{k+1}) dx_{k+1}\right)^{\frac{1}{p}}\allowdisplaybreaks\\
&=\mbox{[ Recall that
$\left|\T^{N_t-k-1}\We^\ast_{N_t}-\We^\ast_{k+1}\right|^p$ is only a function of $x_{k+1}$ ]}\allowdisplaybreaks\\
&=\textstyle
\left(\int_{A\setminus K} \left|
\T^{N_t-k-1}\We^\ast_{N_t}-\We^\ast_{k+1}\right|^p\int_{A\setminus K} \left(\max_{a\in\M{A}} \frac{\txo(x_{k+1}|x_k,a)\eta(x_k)}{\eta(x_{k+1})}\right) dx_k\eta(x_{k+1}) dx_{k+1}\right)^{\frac{1}{p}}.
\end{align*}
Introduce now the upper bound on $\int_{A\setminus K} \left(\max_{a\in\M{A}} \frac{\txo(x_{k+1}|x_k,a)\eta(x_k)}{\eta(x_{k+1})} \right) dx_k$  over the domain $A\setminus K$  as
 \( B=\sup_{x_{k+1}\in A \setminus K}\int_{A\setminus K} \max_{a\in \M{A}}\frac{\txo(x_{k+1}|x_k,a)\eta(x_k)}{\eta(x_{k+1})} dx_k\),~obtaining
\begin{align*}
&\textstyle\leq
\left(\int_{A\setminus K} \left|
\T^{N_t-k-1}\We^\ast_{N_t}-\We^\ast_{k+1}\right|^p B\eta(x_{k+1})dx_{k+1}\right)^{\frac{1}{p}}=B^{\frac{1}{p}}\|
\T^{N_t-k-1}\We^\ast_{N_t}-\We^\ast_{k+1}\|_{p,\eta}. \end{align*}
We have finally shown that
\(
\left\|\T^{N_t-k}\We^\ast_{N_t}-\We^\ast_k\right\|_{p,\eta}\leq \left\|\T \We^\ast_{k+1}-\We^\ast_k\right\|_{p,\eta}+
B^{\frac{1}{p}}\left\| \T^{N_t-k-1}\We^\ast_{N_t}-\We^\ast_{k+1}\right\|_{p,\eta}.\)
\mbox{ \mbox{ }}
\end{proof}


\section{Proof of Theorem \ref{thm:MultiStep}}\label{proof:MultiStep}

\begin{proof}
If we estimate the quantity $r^{\ast}_{x_0}(K,A) = \left(\T^{N_t}\W_{N_t}^\ast\right)(x_0) = \left(\T^{N_t}\We^\ast_{N_t}\right)(x_0)$ by $\left(\Te \We^\ast_{1}\right)(x_0)$,
then we have that $\rFVI=\ind_K(x_0)+\ind_{A\setminus K}(x_0)\left(\Te \We^\ast_{1}\right)(x_0)$.
The absolute deviation of the approximated $\rFVI $ from the exact $r_{x_0}^\ast(K,A)$ is given as
\[
\left|\rFVI-r_{x_0}^\ast(K,A)\right|=\left|\left(\T^{N_t}\We^\ast_{N_t}\right)(x_0)- \left(\Te \We^\ast_{1}\right)(x_0)\right|.
\]

The objective is to present this error as a function of the errors introduced by the approximate mappings  \(\left|\left(\T^{N_t}\We^\ast_{N_t}\right)(x_0)- \left(\Te \We^\ast_{1}\right)(x_0)\right|\),
as well as of the quantities
\(\|\We_1^\ast-\T\We_2^\ast\|_{p,\eta}\),
\(\|\We_2^\ast-\T\We_3^\ast\|_{p,\eta}\),
\ldots,
\(\|\We_{N_t-1}^\ast-\T\We_{N_t}^\ast\|_{p,\eta}\).

To this end, we first express a bound on
\(\left|\left(\T^{N_t}\We^\ast_{N_t}\right)(x_0)- \left(\Te \We^\ast_{1}\right)(x_0)\right|\) as a function of \(|\left(\Te \left(\T\We_1^\ast\right)(x_0) - \We^\ast_{1}\right)(x_0)|\)
and of \(\left\|\T^{N_t-1}\We_{N_t}^\ast-\We_1^\ast\right\|_{p,\eta}\).
Then Lemma \ref{lem:errorpropagation} is used to express \(\|T^{N_t-1}\We_{N_t}^\ast-\We_1^\ast\|_{p,\eta}\) as a function of the errors introduced by the approximate mappings.
Similar to the first chain of inequality in the proof of Lemma~\ref{lem:errorpropagation} applied at step $k=0$ and point $x_0$,
we obtain that
{\small%
 \begin{align*}
&\left|\left(\T^{N_t}\We^\ast_{N_t}\right)(x_0)- \left(\Te \We^\ast_{1}\right)(x_0)\right|\\& \leq \max_{a\in\M{A}}\int_{A\setminus K}  \left|\T^{N_t-1}\We^\ast_{N_t}(x_1)-\We^\ast_1(x_1)\right| \tx{x_1}{x_0,a}dx_1
+\left|\left(\T\We^\ast_1\right)(x_0)-\left(\Te \We^\ast_{1}\right)(x_0)\right|.
\end{align*}}
Let us now introduce a measure for the maximum concentration of the density function $\tx{x_1}{x_0,a}$ over $x_1 \in A \setminus K$,
for any $a \in \M{A}$,
defined relative to the density of the distribution $\eta$ in (\ref{eq:scaling_state}),
as $B_0=\sup_{x_1\in A\setminus K} \max_{a\in \M{A}}\frac{\tx{x_1}{x_0,a}}{\eta(x_1)}$.
Since $B_0\eta(x_1) \geq \tx{x_1}{x_0,a}$, it follows that
{\small%
 \begin{align*}
&\max_{a\in\M{A}}\int_{A\setminus K}  \left|\T^{N_t-1}\We^\ast_{N_t}(x_1)-\We^\ast_1(x_1)\right| \tx{x_1}{x_0,a}dx_1\leq B_0\int_{A\setminus K}  \left|\T^{N_t-1}\We^\ast_{N_t}(x_1)-\We^\ast_1(x_1)\right|  \eta(x_1) dx_1.
\end{align*}}
The last expression corresponds to a $1$-norm with respect to a probability measure $\eta$ over $A\setminus K$.
Exploiting the monotonicity of the $p$-norm with respect to a probability measure, a more general expression for the approximation error is obtained as
\begin{align*} 
\left|\left(\T^{N_t}\We^\ast_{N_t}\right)(x_0)- \Te \We^\ast_{1}(x_0)\right|\leq \left|\left(\Te \We^\ast_{1}\right)(x_0)-\left(\T\We_1^\ast\right)(x_0)\right|+B_0\left\|\left(\T^{N_t-1}\We_{N_t}^\ast\right)-\We_1^\ast\right\|_{p,\eta}.
\end{align*}%
The second term can be expressed as a function of the weighted $p$-norm of the approximations by applying Lemma \ref{lem:errorpropagation}.
This leads to the expression for an upper bound on the approximation error as
\begin{align*}\textstyle
\left|\rFVI-r_{x_0}^\ast(K,A)\right|\leq \left|\left(\Te \We^\ast_{1}\right)(x_0)-\left(\T\We_1^\ast\right)(x_0)\right|
+B_0 \sum_{k=1}^{N_t-1}B^{\frac{k-1}{p}}\left\|\We_k^\ast-\T\We_{k+1}^\ast\right\|_{p,\eta}.
\end{align*}
From the above expression,
a sufficient condition the accuracy in \eqref{eq:acc} to hold is
\begin{align*}
\mathbf{P}\Big\{\left|\left(\Te \We^\ast_{1}\right)(x_0)-\left(\T\We_1^\ast\right)(x_0)\right|+B_0 \sum_{k=1}^{N_t-1}B^{\frac{k-1}{p}}\left\|\We_k^\ast-\T\We_{k+1}^\ast\right\|_{p,\eta}>\acc\Big\}\leq \Pacc.
\end{align*}%
\end{proof}

\section{Sample Complexities}
\label{App:samplecomplexity}
Given $\epsilon_{0,1,2}$ and $\alpha$, select $\delta_{0,1,2}>0$ such that $1-\alpha=\delta_0+(N_t-1)\delta_1+(N_t-1)\delta_2$,
and let us pick values for $N$,$M$, $M_0$ such that
\begin{align*}
\delta_0 \leq 2{|\M{A}|} e^{-2M_0(\epsilon_0)^2}, \quad
\delta_1 \leq 2{|\M{A}|N} e^{-2M(\epsilon_1)^2}, \quad
\delta_2 \leq4 e(d+1)\Big(\frac{32 e}{\epsilon_2^p} \Big)^d e^{-\frac{N\epsilon_2^{2p}}{128}}.
\end{align*}
Note that the first two inequalities are approximated with first order approximation for which we know that $1-(1-2e^{-2M_0(\epsilon_0)^2})^{|\M{A}|}\leq 2{|\M{A}|} e^{-2M_0(\epsilon_0)^2}$ and $1-(1-2e^{-2M(\epsilon_1)^2})^{|\M{A}|N}\leq 2{|\M{A}|N} e^{-2M(\epsilon_1)^2}$. The obtained integer values for $N$,$M$, $M_0$ are given as
\begin{equation*}
\left\{
\begin{array}{lll}
N&=&\Big{\lceil} 128\left(\ln(4e(d+1))+d \ln(32e) \right)\left(\frac{1}{\epsilon_2}\right)^{2p} +128dp\left(\frac{1}{\epsilon_2}\right)^{2p} \ln\left(\frac{1}{\epsilon_2}\right)+128\left(\frac{1}{\epsilon_2}\right)^{2p}\ln\left(\frac{1}{\delta_2}\right)
\Big{\rceil},\\
M&=&\Big{\lceil}\frac{1}{2}\left(\frac{1}{\epsilon_1}\right)^2\left(\ln(2|\M{A}|)+\ln(\frac{1}{\delta_1}) +\ln(N)\right)\Big{\rceil}, \\
M_0&=&\Big{\lceil}\frac{1}{2}\left(\frac{1}{\epsilon_0}\right)^2\left(\ln(2|\M{A}|)+\ln(\frac{1}{\delta_0})\right)\Big{\rceil},
\end{array}
\right. .
\end{equation*}
The use of the obtained $M,M_0,N$ in \eqref{eq:confidenceerrorbound} leads to a confidence of at least $\alpha$.


\section{Proof of Theorem \ref{thm:DataDependent}}\label{proof:emp_acc}
\begin{proof}
The proof of Theorem \ref{thm:DataDependent} is built observing that (a.) the single step error $\|\We^\ast_{k} -\T  \We^\ast_{k+1}\|_{1,\eta}$ is bounded by the sum of the expectations of (\ref{eq:emp_reg}) and (\ref{eq2:bias}); that (b.) the propagation of the single step errors gives a bound on the overall approximation error, see Theorem \ref{thm:MultiStep} --  hence the expected value of the  estimates, propagated over the time horizon, also gives a bound on the approximation error; and that (c.) the one-sided application of the Hoeffding's inequality provides a probabilistic upper bound on the deviation of the estimate from its mean, and therefore also bounds the approximation error probabilistically.  
\\
\textbf{Part (a.)}
\begin{align*}
\|\We^\ast_{k}-\T \We^\ast_{k+1}\|_{1,\eta}
&=\Eo{x}{\ \left|\We^\ast_{k}(x)-\T \We^\ast_{k+1}(x)\right|\ } 
\mbox{ with $\Eo{x}{f(x)}$ the mean of $f(x)$ for $x\sim \eta$.} 
\end{align*}
Define a set of i.i.d. random variables $\vec{y}_1=[y^{a,1}_1,y^{a,2}_1,\ldots,y^{a,\tilde{M}}_1]$  drawn from the distribution $y^{a,j}_1\sim\Tx{\cdot}{x,a}$. Introduce  $\Eo{\vec{y}_1}{\max_{a\in\M{A}}\Te^a_1 \We^\ast_{k+1}(x) |x}$ as an auxiliary variable with $\Te_1^a$ the estimated operator as defined in (\ref{eq:empT1}) and computed over the  $\vec{y}_1$.
{\scriptsize \begin{align}
& =\Eo{x}{\ \left|\We^\ast_{k}(x)-\Eo{\vec{y}_1}{\max_{a\in\M{A}}\Te^a_1 \We^\ast_{k+1}(x) |x} \right.\right.
+\left.\left.\Eo{\vec{y}_1}{\max_{a\in\M{A}}\Te^a_1 \We^\ast_{k+1}(x) |x}-
\T \We^\ast_{k+1}(x)\right|\ }\notag\\
&\leq \Eo{x}{\left|\We^\ast_{k}(x)-\Eo{\vec{y}_1}{\max_{a\in\M{A}}\Te^a_1 \We^\ast_{k+1}(x) |x}\right|} 
+ \Eo{x}{\left|\Eo{\vec{y}_1}{\max_{a\in\M{A}}\Te^a_1 \We^\ast_{k+1}(x) |x} - \T \We^\ast_{k+1}(x)\right|}  \notag\\
&\leq \underbrace{\mathbf{E}_{x,\vec{y_1}}\left[\left|\We^\ast_{k}(x)-\max_{a\in\M{A}}\Te^a_1 \We^\ast_{k+1}(x)\right|\right]}_{\textmd{\scriptsize{ [ Single step error ]}}} 
+\underbrace{\Eo{x}{\left|\Eo{\vec{y}_1}{\max_{a\in\M{A}}\Te^a_1 \We^\ast_{k+1}(x) |x} - \T \We^\ast_{k+1}(x)\right|}}_{  \textmd{\scriptsize{[ Bias term ]}}} .\notag
\end{align}}%
Observe that the \emph{single step error}, $\mathbf{E}_{x,\vec{y_1}}\left[\big|\We^\ast_{k}(x)-\max_{a\in\M{A}}\Te^a_1 \We^\ast_{k+1}(x)\big|\right]$ is equal to $\|\We^\ast_{k} -\Te  \We^\ast_{k+1}\|_{1,\eta}$ and $\mathbf{E}\|\We^\ast_{k} -\Te  \We^\ast_{k+1}\|_{1,\tilde{\eta}}$.
The \emph{bias term} gives the bias introduced by using an estimate of the operator and 
it can be rewritten as the expected value of (\ref{eq2:bias}). Note that $\max_{a\in\M{A}}\Eo{y}{ V_{k+1}(y)|x,a}$ is a function of $x,a$ and  $|\Eo{\vec{y}_1}{f(\vec{y})}|\leq\Eo{\vec{y}_1}{|f(\vec{y})|}$, thus it follows that
{\scriptsize\begin{align*}
 \textmd{\scriptsize{[ Bias term ]}}& \leq  \mathbf{E}_x\Eo{\vec{y}_1}{\left.\left|\max_{a\in\M{A}}\Te^a_1 \We^\ast_{k+1}(x)  -\T \We^\ast_{k+1}(x) \right|\ \right|x}
 \leq  \mathbf{E}_x\Eo{\vec{y}_1}{\left.\max_{a\in\M{A}} \left|\Te^a_1 \We^\ast_{k+1}(x)-
\Eo{\vec{y}_2}{\Te^a_2 \We^\ast_{k+1}(x) |x}\right|\ \right|x}.
\end{align*}}%
The second inequality follows from introducing
 a secondary set of random variables $\vec{y}_2=[y^{a,1}_2,y^{a,2}_2,\ldots,y_2^{a,1\tilde{M}}]$  for which the elements are i.i.d. as $y^{a,1}_2\sim\Tx{\cdot}{x,a}$ and which are independent of $\vec{y}_1$. 
 Substituting $\T \We^\ast_{k+1}(x)$ with $    \max_{a\in\M{A}} \Eo{\vec{y}_2}{\Te^a_2 \We^\ast_{k+1}(x) |x,a}$ we have 
 {\scriptsize
\begin{align*}
 \textmd{\scriptsize{[ Bias term ]}}&\leq \mathbf{E}_x\Eo{\vec{y}_1}{\max_{a\in\M{A}}\Eo{\vec{y}_2}{ \left|\Te^a_1 \We^\ast_{k+1}(x) -
 \Te^a_2 \We^\ast_{k+1}(x) \right||x,a,\vec{y}_1}|x}\\& =\mathbf{E}_{x,\vec{y}_1,\vec{y}_2}\left[\max_{a\in\M{A}} \left|\Te^a_1 \We^\ast_{k+1}(x)-
 \Te^a_2 \We^\ast_{k+1}(x)\right|\right] .\end{align*}}
The last equality is equal to the expected value of the estimated bias  $\mathbf{E} \big\|\max_{a\in\M{A}} \big|\Te^{a}_1\We^\ast_{k+1}- \Te^a_2\We^\ast_{k+1}\big|\big\|_{1,\tilde{\eta}}$. This proves statement (a.).

\smallskip

\textbf{Part (b.) $\&$ (c.)} 
Based on Theorem \ref{thm:MultiStep} $\rFVI$ has accuracy $\acc$ with probability $\Pacc$ if 
{\scriptsize\begin{align}
& \mathbf P\Big\{  |\We^\ast_0(x_0)-\T \We^\ast_1(x_0)| 
+ B_0\sum_{k=1}^{N_t-1}B^{k-1} \Eo{}{\left|\We^\ast_{k}(x)-\max_{a\in\M{A}}\Te^{a}_1\We^\ast_{k+1}(x)\right|
 +\max_{a\in\M{A}} \left|\Te^{a}_1 \We^\ast_{k+1}(x)-
\Te^{a}_2\We^\ast_{k+1}(x)\right|} \geq \acc\Big\}<\Pacc.
\notag
\end{align}%
}%
Which holds under a union bounding argument if
{\scriptsize \begin{align}
& \p{  |\We^\ast_0(x_0)-\T \We^\ast_1(x_0)|
\geq \epsilon_0}<\delta_0\label{app:eq:first}\\
& \p{ \begin{array}{r}B_0\sum_{k=1}^{N_t-1}B^{k-1} \Eo{}{\left|\We^\ast_{k}(x)-\max_{a\in\M{A}}\Te^{a}_1\We^\ast_{k+1}(x)\right|
 +\max_{a\in\M{A}} \left|\Te^{a}_1 \We^\ast_{k+1}(x)-
\Te^{a}_2\We^\ast_{k+1}(x)\right|}\geq B_0 \epsilon \\+
B_0\sum_{k=1}^{N_t-1}B^{k-1}\left(
\left\|\We^\ast_{k} -\Te  \We^\ast_{k+1}\right\|_{1,\tilde{\eta}} 
 + \left\|\max_{a\in\M{A}} \left|\Te^{a}_1\We^\ast_{k+1}- \Te^a_2\We^\ast_{k+1}\right|\right\|_{1,\tilde{\eta}}
\right)
\end{array}}<e^{-2\frac{\Nh \epsilon^2}{L^2}}\label{app:eq:second}
\end{align}}
\noindent and  $\acc$ and $\Pacc$ are given as \eqref{eq:emp_acc} and \eqref{eq:emp_acc_conf}. 
\\*
The probabilistic bound \eqref {app:eq:first} follows from Lemma \ref{eq:empiricalnorm} for the estimation error of an empirical norm with accuracy  $\epsilon_0,\delta_0$ obtained for $p=1$, $M=M_0$ and $N=1$ as long as $0<2e^{-2M_0\epsilon_0^2}<1$.  
The probabilistic bound \eqref {app:eq:second} follows from a one-sided Hoeffding's inequality \citep{Hoeffding}  with random variable 
\begin{align*}\textstyle\sum_{k=1}^{N_t-1}B^{k-1}\left( \left|\We^\ast_{k}(x)-\max_{a\in\M{A}}\Te^{a}_1\We^\ast_{k+1}(x)\right|
 +\max_{a\in\M{A}} \left|\Te^{a}_1 \We^\ast_{k+1}(x)-
\Te^{a}_2\We^\ast_{k+1}(x)\right|\right),\end{align*} 
 obtained from the combination of random variable $x \sim \eta$ and conditional random variables $\vec{y}_1$ and $\vec{y}_2$ and taking values in the range $[0,2\sum_{k=1}^{N_t-1}B^{k-1}]$.
Note that its estimated of interest over $\Nh$ samples 
can be rewritten in the form of (\ref{eq:emp_acc}),
$$\textstyle\sum_{k=1}^{N_t-1}B^{k-1}\Big(
\big\|\We^\ast_{k} -\Te  \We^\ast_{k+1}\big\|_{1,\tilde{\eta}}
+ \big\|\max_{a\in\M{A}} \left|\Te^{a}_1\We^\ast_{k+1}- \Te^a_2\We^\ast_{k+1}\right|\big\|_{1,\tilde{\eta}}\Big).$$
  This concludes the proof of Theorem \ref{thm:DataDependent}.
\end{proof}

\section{Scaling factor for case study}\label{scalingfactor}

Compute $B$ as in \eqref{eq:scaling_B} using the given density distribution of the transitions \eqref{eq:densitytrans}, as
\begin{align*}
B&=\sup_{y\in A \setminus K}\int_{A\setminus K}\frac{1}{\sqrt{|\Sigma|(2\pi)^2}} \max_{a\in \M{A}}\bigg( \frac{\exp\big(-\frac{1}{2} \left(y- \mu\right)^T \Sigma^{-1} \left(y- \mu\right) \big)\eta(x)}{\eta(y)}\bigg)dx\allowdisplaybreaks \\
&= [ \mbox{ $\mu$ is a function of $a$ and $x$, and $\eta(\cdot)$ is constant over $A\setminus K$ }] \\
&=\sup_{y\in A \setminus K}\int_{A\setminus K}\frac{1}{\sqrt{|\Sigma|(2\pi)^2}} \max_{a\in \M{A}} \exp\left(-\frac{1}{2} \left(y- \mu\right)^T \Sigma^{-1} \left(y- \mu\right) \right) dx \allowdisplaybreaks\\
&=[\mbox{Suppose $\Ca$ is invertible, and define } \bar{\mu}(y,a)= \Ca^{-1}y-\Ca^{-1}\Cb a-\Ca^{-1}\Cc , \ \bar{\Sigma}= \Ca^{-1}\Sigma\Ca^{-T}]\allowdisplaybreaks \\
&=\sup_{y\in A \setminus K}\frac{1}{|\Ca|}\int_{A\setminus K}\frac{1}{\sqrt{|\bar{\Sigma}|(2\pi)^2}} \max_{a\in \M{A}} \exp\left(-\frac{1}{2} \left(x- \bar{\mu}(y,a)\right)^T \bar{\Sigma}^{-1}  \left(x- \bar{\mu}(y,a)\right) \right) dx \allowdisplaybreaks\\
&\leq\sup_{y\in A \setminus K}\frac{1}{|\Ca|}\int_{A\setminus K}\frac{1}{\sqrt{|\bar{\Sigma}|(2\pi)^2}} \sum_{a\in \M{A}}\left( \exp\left(-\frac{1}{2} \left(x- \bar{\mu}(y,a)\right)^T \bar{\Sigma}^{-1}  \left(x- \bar{\mu}(y,a)\right) \right)\right) dx \allowdisplaybreaks\\
&=\sup_{y\in A \setminus K}\frac{1}{|\Ca|}\left( \sum_{a\in \M{A}}\int_{A\setminus K}\frac{1}{\sqrt{|\bar{\Sigma}|(2\pi)^2}} \exp\left(-\frac{1}{2} \left(x- \bar{\mu}(y,a)\right)^T \bar{\Sigma}^{-1}  \left(x- \bar{\mu}(y,a)\right)\right) dx  \right).\\
 \end{align*}
The integral is rewritten as one over a scaled $2$-dimension multivariate Gaussian density distribution with mean $\bar{\mu}$ and covariance $\bar{\Sigma}$.
With this result, it can be deduced that $B$ is smaller than $\frac{1}{|\Ca|}|\A|$ as 
\begin{align} 
B&\leq\sup_{y\in A \setminus K}\frac{1}{|\Ca|}\Big( \textstyle\sum\limits_{a\in \M{A}}\int\limits_{\X}\frac{1}{\sqrt{|\bar{\Sigma}|(2\pi)^2}} \exp\left(-\frac{1}{2} \left(x- \bar{\mu}(y,a)\right)^T \bar{\Sigma}^{-1}  \left(x- \bar{\mu}(y,a)\right)\right) dx  \Big)\label{eq:app:compB}\\&\leq\sup\limits_{y\in A \setminus K}\frac{1}{|\Ca|}   \textstyle\sum\limits_{a\in \M{A}}1 =\frac{1}{|\Ca|}|\A|.\notag
\end{align}

\end{document}